\documentclass[runningheads]{llncs} 

\usepackage{booktabs}
\usepackage[utf8]{inputenc}
\usepackage{url}
\usepackage{xcolor}
\usepackage{xspace}
\usepackage[all]{xy}
\usepackage{amssymb}
\usepackage{mathtools}
\usepackage{hyperref}
\usepackage{bussproofs}
\usepackage{graphicx}
\usepackage{bm}
\usepackage{extpfeil}
\usepackage{stmaryrd} 
\usepackage{comment}
\usepackage{longtable}

\hypersetup{
    colorlinks=true,
    citecolor=blue,
    linkcolor=blue,
    urlcolor=blue
}

\excludecomment{proceedings}
\includecomment{techreport}

\newcommand{\subst}[2]{\left\{{#2}\middle/{#1}\right\}}
\newcommand{\ssubst}[2]{\left\{{#2}\right\}_{#1}}

\newcommand{\pair}[1]{\langle {#1} \rangle}
\newcommand{\vect}[1]{\overrightarrow{#1}}


\newcommand{\cE}[0]{\mathcal{E}}
\newcommand{\cF}[0]{\mathcal{F}}
\newcommand{\cI}[0]{\mathcal{I}}

\newcommand{\cQ}[0]{\mathcal{Q}}
\newcommand{\cR}[0]{\mathcal{R}}
\newcommand{\cS}[0]{\mathcal{S}}
\newcommand{\cT}[0]{\mathcal{T}}

\newcommand{\derives}[0]{\mathrel{\triangleright}}
\newcommand{\eeq}[1]{\xleftrightarrow{#1}}
\newcommand{\red}[1]{\xrightarrow{#1}}
\newcommand{\rred}[1]{\xtwoheadrightarrow{#1}}
\newcommand{\bred}[0]{\red{\;\; \beta \;\;}}
\newcommand{\bbred}[0]{\red{\;\; \bar{\beta} \;\;}}
\newcommand{\sred}[0]{\red{\;\; \sigma \;\;}}

\newcommand{\tred}[0]{\red{\;\; \tau \;\;}}

\newcommand{\stred}[0]{\red{\; \sigma\tau \;}}
\newcommand{\steeq}[0]{\eeq{\;\; \sigma\tau \;\;}}
\newcommand{\Bred}[0]{\red{\mbox{Beta}}}
\newcommand{\brred}[0]{\rred{\beta}}

\newcommand{\bbrred}[0]{\rred{\bar{\beta}}}
\newcommand{\srred}[0]{\rred{\sigma}}
\newcommand{\trred}[0]{\rred{\tau}}
\newcommand{\strred}[0]{\rred{\sigma\tau}}
\newcommand{\Brred}[0]{\rred{\mbox{Beta}}}
\newcommand{\CAUmsred}[0]{\red{\CAUms}}
\newcommand{\CAUmsrred}[0]{\rred{\CAUms}}
\newcommand{\CAUmred}[0]{\red{\CAUm}}
\newcommand{\CAUmrred}[0]{\rred{\CAUm}}

\newcommand{\qrefl}[0]{\mathbf{r}}

\newcommand{\qtrans}[0]{\mathbf{t}}
\newcommand{\qapp}[0]{\mathbf{app}}
\newcommand{\qlam}[0]{\mathbf{lam}}
\newcommand{\qlet}[0]{\mathbf{let_!}}
\newcommand{\qti}[0]{\mathbf{ti}}
\newcommand{\qtr}[0]{\mathbf{tb}}

\newcommand{\qba}[0]{{\bm{\beta}}}
\newcommand{\qbb}[0]{{\bm{\beta_!}}}

\newcommand{\AU}[1]{\left\llbracket{#1}\right\rrbracket}

\newcommand{\bang}{\mathord{!}}

\def\orelse{\mathbin{|}}
\newcommand{\erase}[1]{\left\lfloor {#1} \right\rfloor}
\newcommand{\trailify}[1]{\left\lceil {#1} \right\rceil}
\newcommand{\Erase}[1]{\llfloor {#1} \rrfloor}
\newcommand{\Trailify}[1]{\llceil {#1} \rrceil}
\newcommand{\focus}[1]{\left\| {#1} \right\|}
\def\id{\pair{}}
\def\comp{\mathbin{\circ}}
\def\lift{\mathbin{\uparrow}}
\newcommand{\CAU}{{\ensuremath{\textbf{CAU}}}\xspace}
\def\CAUm{{\ensuremath{\textbf{CAU}^-}}\xspace}
\def\CAUms{{\ensuremath{\textbf{CAU}^-_\sigma}}\xspace}
\def\tinsp{\iota}
\def\Hole{\blacksquare}
\def\dec{\underline}
\newcommand{\tmdec}[1]{\cT({#1})}

\newcommand{\jtm}[1]{{#1}~\mathbf{tm}}
\newcommand{\jctx}[1]{{#1}~\mathbf{ctx}}

\DeclareMathOperator{\tlet}{let_!}
\DeclareMathOperator{\tlift}{lift}
\def\kwplus{\mathit{plus}}

\def\kwfact{\mathit{fact}}
\def\kwsum{\mathit{sum}}

\def\rVS{\noalign{\vskip 2ex plus.4ex minus.4ex}}

\input{widebar}

\newcommand{\pure}[1]{\widebar{#1}}

\begin{document}

\title{Explicit Auditing}

\author{
	Wilmer Ricciotti
	\and
	James Cheney
}
\authorrunning{W. Ricciotti and J. Cheney}
\institute{
	LFCS, University of Edinburgh
	\\
	\email{research@wilmer-ricciotti.net}
	\\
	\email{jcheney@inf.ed.ac.uk}
}

\maketitle

\begin{abstract}
The Calculus of Audited Units (CAU) is a typed lambda calculus
resulting from a computational interpretation of Artemov's
Justification Logic under the Curry-Howard isomorphism; it extends the
simply typed lambda calculus by providing \emph{audited types},
inhabited by expressions carrying a \emph{trail} of their past
computation history. Unlike most other auditing techniques, CAU allows
the inspection of trails at runtime as a first-class operation, with
applications in security, debugging, and transparency of scientific computation.

An efficient implementation of CAU is challenging: not only do the
sizes of trails grow rapidly, but they also need to be normalized after
every beta reduction. In this paper, we study how to reduce terms more
efficiently in an untyped variant of CAU by means of explicit
substitutions and explicit auditing operations, finally deriving a call-by-value abstract machine. 
\end{abstract}

\section{Introduction}
Transparency is an increasing concern in computer systems: for complex
systems, whose desired behavior may be difficult to formally specify,
auditing is an important complement to traditional techniques for
verification and static analysis for
security~\cite{Abadi2003,amir-mohamedian2016,Banerjee2005,vaughan08csf,jia08icfp,Garg2011},
program slicing~\cite{Perera2012,Ricciotti2017b}, and 
provenance~\cite{Moreau2010,Ricciotti2017c}. However, formal foundations of auditing as a
programming language primitive are not yet well-established: most
approaches view auditing as an extra-linguistic operation, rather than
a first-class construct.  Recently, however, Bavera and
Bonelli~\cite{Bavera2015} introduced 
a calculus in which
recording and analyzing audit trails are first-class operations.  They
proposed a $\lambda$-calculus based on a Curry-Howard correspondence
with Justification
Logic~\cite{Artemov2001,Artemov2007,Artemov2008,Artemov2008b} called \emph{calculus of audited units}, or
\CAU. In recent work, we developed a simplified form of \CAU and proved
strong normalization~\cite{Ricciotti2017a}.

The type system of \CAU is based on modal
logic, following Pfenning and Davies~\cite{Pfenning2001}: it provides a type
$\AU{s}{A}$ of audited units, where $s$ is ``evidence'',
or the expression that was evaluated to produce the result of type
$A$.  Expressions of this type $\bang_qM$ contain a value of type $A$ along
with a ``trail'' $q$ explaining how $M$ was obtained by evaluating
$s$.  Trails are essentially (skeletons of) proofs of reduction of
terms, which can be \emph{inspected} by structural recursion using a special language construct.

To date, most work on foundations of auditing has focused on design,
semantics, and correctness properties, and relatively little attention
has been paid to efficient execution, while most work on auditing systems has
neglected these foundational aspects.  Some work on tracing and
slicing has investigated the use of ``lazy''
tracing~\cite{Perera2012}; however, to the best of our knowledge there
is no prior work on how to efficiently evaluate a language such as
\CAU in which auditing is a built-in operation.
This is the problem studied in this paper.

 A na\"ive approach to implementing the semantics of \CAU as given by
Bavera and Bonelli runs immediately into the following problem: a \CAU
reduction first performs a \emph{principal contraction} (e.g. beta
reduction), which typically introduces a local trail annotation
describing the reduction, that can block further beta-reductions. The
local trail annotations are then moved up to the nearest enclosing
audited unit constructor using one or more \emph{permutation
  reductions}.  For example:
\begin{align*}
\bang_q \cF[(\lambda x.M)~N] 
& 
\bred \bang_q \cF [\qba \derives M\subst{x}{N}]
\\
&
\trred\bang_{\qtrans(q,\cQ[\qba])} \cF [M\subst{x}{N}]
\end{align*}
where $\cF[]$ is a bang-free evaluation context and $\cQ[\qba]$ is a
subtrail that indicates where in context $\cF$ the $\qba$-step was
performed.  As the size of the term being executed (and distance
between an audited unit constructor and the redexes) grows, this
evaluation strategy slows down quadratically in the worst case;
eagerly materializing the traces likewise imposes additional storage
cost.

While some computational overhead seems inevitable to accommodate
auditing, both of these costs can in principle be mitigated.  Trail
permutations are computationally expensive and can often be delayed
without any impact on the final outcome. Pushing trails to the closest
outer bang does not serve any real purpose: it would be more efficient
to keep the trail where it was created and perform normalization only
if and when the trail must be inspected (and this operation does not
even actually require an actual pushout of trails, because we can
reuse term structure to compute the trail structure on-the-fly).

This situation has a well-studied analogue: in the $\lambda$-calculus,
it is not necessarily efficient to eagerly perform all substitutions
as soon as a $\beta$-reduction happens.  Instead, calculi of
\emph{explicit substitutions} such as Abadi et al.'s
$\lambda\sigma$~\cite{lambdasigma} have been developed in which
substitutions are explicitly tracked and rewritten.  Explicit
substitution calculi have been studied extensively as a bridge between
the declarative rewriting rules of $\lambda$-calculi and efficient
implementations.  Inspired by this work, we hypothesize that calculi
with auditing can be implemented more efficiently by delaying the
operations of trail extraction and erasure, using explicit symbolic
representations for these operations instead of performing them
eagerly. 
 
Particular care must be placed in making sure that the trails we produce still
correctly describe the order in which operations were actually
performed (e.g. respecting call-by-name or call-by-value reduction):
when we perform a principal contraction, pre-existing trail
annotations must be recorded as history that happened \emph{before}
the contraction, and not after. In the original eager reduction style,
this is trivial because we never contract terms containing trails;
however, we will show that, thanks to the explicit trail operations, 
correctness can be achieved even when adopting a lazy normalization of trails.
\begin{techreport}
Accordingly, we will introduce explicit terms for \emph{delayed
trail erasure} $\erase{M}$ and \emph{delayed trail extraction} 
$\trailify{M}$.  
We can use these features to decrease the cost of normalization: for instance, the $\beta$-reduction above can be replaced by a rule with delayed treatment of substitution and trails, denoted by $\mathrm{Beta}$:
\begin{align*}
&
\cF[(\lambda.M)~N] \Bred
\\
& 
\quad \bang_q \cF [\qtrans(\qapp(\qlam(\trailify{M}),\trailify{N}),\qba) \derives \erase{M}[\erase{N}]]
\end{align*}
Here, we use de Bruijn notation~\cite{debruijn72} (as in
$\lambda\sigma$, and anticipating Sections~\ref{naive} and
\ref{CAUms}), and write $M[N]$ for the explicit substitution of $N$
for the outermost bound variable of $\lambda.M$.  The trail
constructor $\qtrans$ stands for transitive composition of trails,
while $\qapp$ and $\qlam$ are congruence rules on trails, so the trail
$\qtrans(\qapp(\qlam(\trailify{M}),\trailify{N}),\qba) $ says that the
redex's trail is constructed by extracting the latent trail
information from $M$ and $N$, combining it appropriately, and then
performing a $\qba$ step.  The usual contractum itself is obtained by
substituting the erased argument $\erase{N}$ into the erased function
body $\erase{M}$.  Although this may look a bit more verbose than the
earlier beta-reduction, the additional work done to create
the trail $\qtrans(\qapp(\qlam(\trailify{M}),\trailify{N}),\qba)$ is
all work that would have been done anyway using the eager system,
while the use of lazy trail-extraction and trail-erasure operations
gives us many more ways to do the remaining work
efficiently --- for example, if the trail is never subsequently used,
we can just discard it without doing any more work.
\end{techreport}

\paragraph*{Contributions}
We study an extension of Abadi et al.'s 
calculus
$\lambda\sigma$~\cite{lambdasigma} with explicit auditing operations.  We consider a
simplified, untyped variant \CAUm of the Calculus of Audited Units
(Section~\ref{CAUm}); this simplifies our presentation because type
information is not needed during execution.  We revisit
$\lambda\sigma$ in Section~\ref{naive}, extend it to include auditing
and trail inspection features, and discuss problems with this initial,
na\"ive approach.  We address these problems by developing a new calculus \CAUms with explicit versions of the ``trail extraction'' and ``trail erasure'' operations (Section~\ref{CAUms}), and we show that it correctly refines \CAUm (subject to an
obvious translation). In Section~\ref{SECD}, we build on \CAUms to define an abstract machine for audited computation and prove its correctness.
\begin{proceedings}
Some proofs have been omitted due to space constraints and are included in the extended version of this paper.\footnote{Available online at \url{http://www.yyy.zzz}}
\end{proceedings}
\begin{techreport}
Details of the proofs are included in the appendix.
\end{techreport}


\section{The Untyped Calculus of Audited Units}\label{CAUm}

The language \CAUm presented here is an untyped version of the calculi
$\lambda^h$~\cite{Bavera2015} and Ricciotti and
Cheney's $\lambda^{hc}$~\cite{Ricciotti2017a} obtained by erasing all typing
information and a few other related technicalities: this will allow us
to address all the interesting issues related to the reduction of \CAU
terms, but with a much less pedantic syntax. To help us explain the
details of the calculus, we adapt some examples from our previous
paper~\cite{Ricciotti2017a}; other examples are described by Bavera
and Bonelli~\cite{Bavera2015}.

Unlike the typed variant of the calculus, we only need one sort of
variables, denoted by the letters $x, y, z \ldots$. The syntax of \CAUm is as follows:

\begin{tabular}{llcl}
\textbf{Terms} & $M,N$ & $::=$ & $x \orelse \lambda x.M \orelse M~N \orelse \tlet(x := M,N) \orelse \bang_q M \orelse q \triangleright M \orelse \tinsp(\vartheta)$ \\
\textbf{Trails} & $q,q'$ & $::=$ & $\qrefl \orelse \qtrans(q,q') \orelse \qba \orelse \qbb \orelse \qti \orelse \qlam(q) \orelse \qapp(q,q') \orelse \qlet(q,q') \orelse \qtr(\zeta)$ 
\end{tabular}

\CAUm extends the pure lambda calculus with \emph{audited units}
$\bang_q M$ (colloquially, ``bang $M$''), whose purpose is to decorate the term $M$ with a log $q$ of its computation history, called \emph{trail} in our terminology: when $M$ evolves as a result of computation, $q$ will be updated by adding information about the reduction rules that have been applied. The form $\bang_q M$ is in general not intended for use in source programs: instead, we will write $\bang~M$ for $\bang_\qrefl M$, where $\qrefl$ represents the empty execution history (\emph{reflexivity} trail).

Audited units can then be employed in larger terms by means of the ``let-bang'' operator, which unpacks an audited unit and thus allows us to access its contents. The variable declared by a $\tlet$ is bound in its second argument: in essence $\tlet(x := \bang_q M,N)$ will reduce to $N$, where free occurrences of $x$ have been replaced by $M$; the trail $q$ will not be discarded, but will be used to produce a new trail explaining this reduction.

The expression form $q \derives M$ is an auxiliary, intermediate annotation of $M$ with partial history information which is produced during execution and will eventually stored in the closest surrounding bang.

\begin{example}\label{ex:letbang}
In \CAUm we can express history-carrying terms explicitly: for instance, if we use $\bar{n}$ to denote the Church encoding of a natural number $n$, and $\kwplus$ or $\kwfact$ for lambda terms computing addition and factorial on said representation, we can write audited units like
\[
\bang_q \bar{2} \qquad \bang_{q'} \bar{6}
\]
where $q$ is a trail representing the history of $\bar{2}$ i.e., for instance, a witness for the computation that produced $\bar{2}$ by reducing $\kwplus~\bar{1}~\bar{1}$; likewise, $q'$ might describe how computing $\kwfact~\bar{3}$ produced $\bar{6}$.
Supposing we wish to add these two numbers together, at the same time retaining their history, we will use the $\tlet$ construct to look inside them:
\[
\tlet(x := \bang_q \bar{2}, \tlet(y := \bang_{q'} \bar{6}, \kwplus~x~y)) \rred{} q'' \derives \bar{8}
\]
where the final trail $q''$ is produced by composing $q$ and $q'$; if this reduction happens inside an external bang, $q''$ will eventually be captured by it. 
\end{example}

Trails, representing sequences of reduction steps, encode the
(possibly partial) computation history of a given subterm. The main
building blocks of trails are $\qba$ (representing standard beta
reduction), $\qbb$ (contraction of a let-bang redex) and $\qti$
(denoting the execution of a trail inspection). For every class of
terms we have a corresponding congruence trail
($\qlam, \qapp, \qlet, \qtr$, the last of which is associated with
trail inspections), with the only exception of bangs, which do not
need a congruence rule because they capture all the computation
happening inside them. The syntax of trails is completed by
reflexivity $\qrefl$ (representing a null computation history, i.e. a
term that has not reduced yet) and transitivity $\qtrans$
(i.e. sequential composition of execution steps).  As discussed by our
earlier paper~\cite{Ricciotti2017a}, we omit Bavera and Bonelli's symmetry trail
form.

\begin{example}\label{ex:applambda}
We build a pair of natural numbers using Church's encoding: 
\begin{align*}
\bang~((\lambda x,y,p.p~x~y)~2)~6 
& 
\red{} \bang_{\qtrans(\qrefl,\qapp(\qba,\qrefl))}~(\lambda y,p.p~2~y)~6
\\
& \red{} \bang_{\qtrans(\qtrans(\qrefl,\qapp(\qba,\qrefl)),\qba)}~\lambda p.p~2~6
\end{align*}
The trail for the first computation step is obtained by transitivity (trail constructor $\qtrans$) from the original trivial trail ($\qrefl$, i.e. reflexivity) composed with $\qba$, which describes the reduction of the applied lambda: this subtrail is wrapped in a congruence $\qapp$ because the reduction takes place deep inside the left-hand subterm of an application (the other argument of $\qapp$ is reflexivity, because no reduction takes place in the right-hand subterm).

The second beta-reduction happens at the top level and is thus not wrapped in a congruence. It is combined with the previous trail by means of transitivity.
\end{example}

The last term form $\tinsp(\vartheta)$, called \emph{trail inspection}, will perform primitive recursion on the computation history of the current audited unit. The metavariables $\vartheta$ and $\zeta$ associated with trail inspections are \emph{trail replacements}, i.e. maps associating to each possible trail constructor, respectively, a term or a trail:
\begin{align*}
\vartheta ::= 
& 
\{ M_1/\qrefl, M_2/\qtrans, M_3/\qba, M_4/\qbb,M_5/\qti, M_6/\qlam, M_7/\qapp,M_8/\qlet, M_9/\qtr \} 
\\
\zeta ::= 
& 
\{ q_1/\qrefl, q_2/\qtrans,q_3/\qba, q_4/\qbb,q_5/\qti, q_6/\qlam, q_7/\qapp, q_8/\qlet, q_9/\qtr \}
\end{align*}
When the trail constructors are irrelevant for a certain $\vartheta$
or $\zeta$, we will omit them, using the notations $\{
\vect{M} \}$ or $ \{ \vect{q} \}$.
These constructs represent (or describe) the nine cases of a
structural recursion operator over trails, which we write as
$q\vartheta$.

\begin{definition}
The operation $q\vartheta$, which produces a term by structural recursion on $q$ applying the inspection branches $\vartheta$, is defined as follows:
\[
\begin{array}{c}
\qrefl\vartheta \triangleq \vartheta(\qrefl) 
\qquad
\qtrans(q,q')\vartheta \triangleq
  \vartheta(\qtrans)~(q\vartheta)~(q'\vartheta)
\qquad
\qba\vartheta \triangleq \vartheta(\qba) 
\qquad
\qbb\vartheta \triangleq \vartheta(\qbb)

\\

\qti\vartheta \triangleq \vartheta(\qti) 
\qquad
\qlam(q)\vartheta \triangleq \vartheta(\qlam)~(q\vartheta)
\qquad
\qtr(\{\vect{q}\})\vartheta \triangleq \vartheta(\qtr)~\vect{(q\vartheta)}

\\

\qapp(q,q')\vartheta \triangleq \vartheta(\qapp)~(q\vartheta)~(q'\vartheta)
\qquad
\qlet(q,q')\vartheta \triangleq \vartheta(\qlet)~(q\vartheta)~(q'\vartheta) 
\\
\end{array}
\]
where the sequence $\vect{(q\vartheta)}$ is obtained from $\vect{q}$ by pointwise recursion.
\end{definition}
\begin{example}\label{ex:trplus}
Trail inspection can be used to count all of the contraction steps in the history of an audited unit, by means of the following trail replacement:
\begin{align*}
\vartheta_+  ::= 
&
\{ \bar{0}/\qrefl, \kwplus/\qtrans, \bar{1}/\qba, \bar{1}/\qbb, \bar{1}/\qti, \lambda x.x/\qlam, \kwplus/\qapp,\kwplus/\qlet, \kwsum/\qtr \}
\end{align*}
where $\kwsum$ is a variant of $\kwplus$ taking nine arguments, as required by the arity of $\qtr$. For example, we can count the contractions in $q = \qtrans(\qlet(\qba,\qrefl),\qbb)$ as:
\[
q\vartheta_+ = \kwplus~(\kwplus~\bar{1}~\bar{0})~\bar{1}
\]
\end{example}

\subsection{Reduction}
Reduction in \CAUm 
includes rules to contract the usual beta redexes (applied lambda abstractions), ``beta-bang'' redexes, which unpack the bang term
appearing as the definiens of a $\tlet$, and trail inspections. These rules, which we call \emph{principal contractions}, are defined as follows:
\begin{gather*}
(\lambda x.M)~N  \bred \qba \derives M\subst{x}{N} 
\qquad
\tlet(x := \bang_q M,N) \bred \qbb \derives N\subst{x}{q \derives M} 
\\
\bang_q \cF[\tinsp(\vartheta)] \bred \bang_q \cF[\qti \derives q\vartheta]
\end{gather*}
Substitution $M\subst{x}{N}$ is defined in the traditional way, 
avoiding variable capture. 
The first contraction is familiar, except for the fact that the reduct $M\subst{x}{N}$ has been annotated with a $\qba$ trail. The second one deals with unpacking a bang: from $\bang_q M$ we obtain $q \derives M$, which is then substituted for $x$ in the target term $N$; the resulting term is annotated with a $\qbb$ trail.
The third contraction defines the result of a trail inspection $\tinsp(\vartheta)$. Trail inspection will be contracted by capturing the current history, as stored in the nearest enclosing bang, and performing structural recursion on it according to the branches defined by $\vartheta$. 
The concept of ``nearest enclosing bang'' is made formal by contexts $\cF$ in which the hole cannot appear inside a bang (or \emph{bang-free} contexts, for short):
\begin{align*}
\cF & ::= 
\blacksquare \orelse \lambda x.\cF \orelse \cF~M \orelse M~\cF \orelse \tlet(\cF,M) \orelse \tlet(M,\cF) \orelse q \derives \cF \orelse \tinsp(\{ \vect{M},\cF,\vect{N} \})
\end{align*}
The definition of the principal contractions is completed, as usual, by a
contextual closure rule stating that they can appear in any context $\cE$:
\begin{align*}
\cE & ::= 
\blacksquare \orelse \lambda x.\cE \orelse \cE~M \orelse M~\cE
  \orelse \tlet(\cE,M) \orelse \tlet(M,\cE) \orelse \bang_q \cE
  \orelse q \derives \cE \orelse \tinsp(\{ \vect{M},\cE,\vect{N} \})
\end{align*}
\begin{prooftree}
\AxiomC{$M \bred N$}
\UnaryInfC{$\cE[M] \bred \cE[N]$}
\end{prooftree}
The principal contractions introduce local trail subterms
$q' \derives M$, which can block other reductions. Furthermore, the rule for trail inspection assumes that the $q$ annotating the enclosing bang really is a complete log of the history of the audited unit; but at the same time, it violates this invariant, because the $\qti$ trail created after the contraction is not merged with the original history $q$.

For these reasons, we only want to perform principal contractions on terms not containing local trails: after each principal contraction, we apply the following rewrite rules, called \emph{permutation reductions}, to ensure that the local trail is moved to the nearest enclosing bang:
\[
\begin{array}{c}
\begin{array}{rl@{\hspace{2em}}rl}
\qrefl \derives M & \tred M
&
q \derives (q' \derives M) 
& \tred \qtrans(q, q') \derives M
\\
\bang_q (q' \triangleright M) & \tred \bang_{\qtrans(q,q')} M
&
\lambda x. (q \derives M) & \tred \qlam(q) \derives \lambda x.M 
\\
(q \derives M)~N & \tred \qapp(q,\qrefl) \derives M~N 
&
M~(q \derives N) & \tred \qapp(\qrefl,q) \derives M~N
\end{array}
\\
\begin{array}{rl}
\tlet(x := q \triangleright M, N) & \tred \qlet(q,\qrefl) \triangleright \tlet(x := M,N)
\\
\tlet(x := M, q \triangleright N) & \tred \qlet(\qrefl,q) \triangleright \tlet(x := M,N)
\\
\tinsp(\{M_1,\ldots,q \derives M_i,\ldots,M_9 \}) & \tred \qtr(\{\qrefl,\ldots,q,\ldots,\qrefl\}) \derives \tinsp(\{M_1,\ldots,M_9\})
\end{array}
\end{array}
\]
Moreover, the following rules are added to the $\tred$ relation to ensure confluence:
\[
\begin{array}{c}
\begin{array}{rl@{\hspace{2em}}rl@{\hspace{2em}}rl}
\qtrans(q,\qrefl) & \tred q
&
\qtrans(\qrefl,q) & \tred q 
&
\qtr(\{\vect{\qrefl}\}) & \tred \qrefl
\\
\qapp(\qrefl,\qrefl) & \tred \qrefl
& 
\qlam(\qrefl) & \tred \qrefl
&
\qlet(\qrefl,\qrefl) & \tred \qrefl
\end{array}
\\
\begin{array}{rl}
 \qtrans(\qtrans(q_1,q_2),q_3) & \tred \qtrans(q_1,\qtrans(q_2,q_3))
 \\
\qtrans(\qlam(q),\qlam(q')) & \tred \qlam(\qtrans(q,q')) 
\\
\qtrans(\qlam(q_1),\qtrans(\qlam(q_1'),q)) & \tred \qtrans(\qlam(\qtrans(q_1,q'_1)),q) 
\\
\qtrans(\qapp(q_1,q_2),\qapp(q'_1,q'_2)) & \tred \qapp(\qtrans(q_1,q'_1),\qtrans(q_2,q'_2)) 
\\
\qtrans(\qapp(q_1,q_2),\qtrans(\qapp(q'_1,q'_2)),q) & \tred \qtrans(\qapp(\qtrans(q_1,q'_1),\qtrans(q_2,q'_2)),q) 
\\
\qtrans(\qlet(q_1,q_2),\qlet(q'_1,q'_2)) & \tred \qlet(\qtrans(q_1,q'_1),\qtrans(q_2,q'_2)) 
\\
\qtrans(\qlet(q_1,q_2),\qtrans(\qlet(q'_1,q'_2)),q) & \tred \qtrans(\qlet(\qtrans(q_1,q'_1),\qtrans(q_2,q'_2)),q)
\\
\qtrans(\qtr(\vect{q_1}),\qtr(\vect{q_2})) & \tred \qtr(\vect{\qtrans(q_1,q_2)}) 
\\
\qtrans(\qtr(\vect{q_1}),\qtrans(\qtr(\vect{q_2}),q)) & \tred \qtrans(\qtr(\vect{\qtrans(q_1,q_2)}),q)
\end{array}
\end{array}
\]
As usual, $\tred$ is completed by a contextual closure rule. We prove

\begin{lemma}[\cite{Bavera2015}]
$\tred$ is terminating and confluent.
\end{lemma}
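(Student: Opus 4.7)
The plan is to reuse the argument of Bavera and Bonelli~\cite{Bavera2015}, adapting it to the simpler untyped setting (no symmetry trails, no type annotations). I would split the work into two parts: termination, proved by exhibiting a strictly decreasing well-founded measure; and confluence, obtained from termination plus local confluence via Newman's lemma.

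For termination, I would group the $\tred$-rules into three classes and handle them with a lexicographic measure. The first component counts, over every occurrence of an annotation subterm $q \derives (\cdot)$, its ``distance'' to the nearest enclosing $\bang$-binder or to the root, where each intervening non-$\bang$ constructor contributes one. Every permutation rule (including $\bang_q(q' \derives M) \tred \bang_{\qtrans(q,q')} M$) strictly decreases this sum, while the purely trail-level rules leave it unchanged. The second component measures the top-level $\qtrans$-structure with a weighted size that strictly decreases under right-associativity $\qtrans(\qtrans(q_1,q_2),q_3) \tred \qtrans(q_1,\qtrans(q_2,q_3))$ and under every distribution rule fusing two congruence nodes under $\qtrans$ into one congruence over a $\qtrans$; a standard polynomial interpretation ($|\qtrans(q,q')| = 2|q| + |q'| + 1$ together with a strict penalty for congruence-under-$\qtrans$) works. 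The third component is the total symbol count, which handles $\qrefl$-elimination rules such as $\qtrans(q,\qrefl) \tred q$, $\qlam(\qrefl) \tred \qrefl$, and $\qtr(\{\vect{\qrefl}\}) \tred \qrefl$. Closure under contexts is routine once the measures are shown to be monotone in subterms.

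For confluence, by Newman's lemma it suffices to prove local confluence, which reduces to checking that every critical pair joins. Most pairs are vacuous because the rules act on syntactically disjoint term or trail forms: permutation rules are parameterized by the outer constructor, trail-simplification rules by the shape of a $\qtrans$ or congruence head. The non-trivial overlaps fall into two families: (i) overlaps among trail-simplification rules, e.g.~$\qtrans(\qtrans(\qlam(q_1),\qlam(q_2)),q_3)$, which can be rewritten either by associativity or by the $\qlam$-distribution rule, and (ii) overlaps between a permutation rule and a simplification rule, e.g.~when a permutation produces $\qtrans(\qrefl,q)$ or $\qapp(\qrefl,\qrefl)$ which then simplifies. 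Each such pair is joined by a short, direct computation using the remaining rules.

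The main obstacle is calibrating the second measure: several distribution rules introduce a $\qtrans$ \emph{inside} a congruence and could a priori be increasing in naive size. The polynomial interpretation must assign enough weight to top-level congruence-under-$\qtrans$ occurrences to compensate, while still being compatible with right-associativity. Once the measure is fixed, the critical-pair analysis for confluence is mechanical; this is why I would simply appeal to the published proof rather than reproducing it here.
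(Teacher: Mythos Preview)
The paper does not actually prove this lemma: it is stated with a citation to Bavera and Bonelli and left at that. Your proposal therefore goes well beyond what the paper does, and your concluding remark---that you would ultimately appeal to the published proof---is in fact exactly the paper's ``proof''. Your sketch of a lexicographic termination argument followed by Newman's lemma is the standard shape for such results and is plausible; the one place where you would need real care is, as you note, the polynomial interpretation for the distribution rules, since several of them push a $\qtrans$ under a congruence constructor and simultaneously duplicate it (e.g.\ the $\qapp$ and $\qlet$ distribution rules create two inner $\qtrans$ nodes from one outer one), so a naive weighting will not work and the interpretation of the congruence constructors must be chosen to absorb this growth.

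It is worth noting that when the paper later needs the analogous result for the larger system $\sred \cup \tred$ (Theorem~\ref{thm:ST}), it does not construct a measure by hand at all: termination is discharged by an automated tool (AProVE), and confluence is obtained from local confluence via Newman. So even where the authors do give an argument, they opt for automation over a bespoke lexicographic measure; your hand-crafted approach is more explicit but also more labour-intensive than anything the paper commits to.
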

When a binary relation $\red{\cR}$ on terms is terminating and confluent, we will write $\cR(M)$ for the unique $\cR$-normal form of $M$.
Since principal contractions must be performed on $\tau$-normal terms, it is convenient to merge contraction and $\tau$-normalization in a single operation, which we will denote by $\CAUmred$:
\begin{prooftree}
\AxiomC{$M \bred N$}
\UnaryInfC{$M \CAUmred \tau(N)$}
\end{prooftree}

\begin{example}
We take again the term from Example~\ref{ex:letbang} and reduce the outer $\tlet$ as follows:
\[
\begin{array}{ll}
\multicolumn{2}{l}{
\bang~\tlet(x := \bang_q 2, \tlet(y := \bang_{q'} 6, \kwplus~x~y))
}
\\
\bred 
&
\bang~(\qbb \derives \tlet(y := \bang_{q'} 6, \kwplus~(q \derives 2)~y))
\\
\trred
&
\bang_{\qtrans(\qbb, \qlet(\qrefl,\qapp(\qapp(\qrefl,q),\qrefl)))}
\tlet(y := \bang_{q'} 6, \kwplus~2~y)
\end{array}
\]
This $\tlet$-reduction substitutes $q \derives 2$ for $x$; a $\qbb$ trail is produced immediately inside the bang, in the same position as the redex. Then, we $\tau$-normalize the resulting term, which results in the two trails being combined and used to annotate the enclosing bang.
\end{example}

\section{Na\"ive explicit substitutions}\label{naive}
We seek to adapt the existing abstract machines for the efficient normalization of lambda terms to \CAUm. Generally speaking, most abstract machines act on nameless terms, using de~Bruijn's indices~\cite{debruijn72}, thus avoiding the need to perform renaming to avoid variable capture when substituting a term into another. 

Moreover, since a substitution $M\subst{x}{N}$ requires to scan the whole term $M$ and is thus \emph{not} a constant time operation, it is usually not executed immediately in an eager way. The abstract machine actually manipulates \emph{closures}, or pairs of a term $M$ and an environment $s$ declaring lazy substitutions for each of the free variables in $M$: this allows $s$ to be applied in an incremental way, while scanning the term $M$ in search for a redex.
In the $\lambda\sigma$-calculus of Abadi et al. \cite{lambdasigma}, lazy substitutions and closures are manipulated explicitly, providing an elegant bridge between the classical $\lambda$-calculus and its concrete implementation in abstract machines. Their calculus expresses beta reduction as the rule
\[
(\lambda.M)~N \longrightarrow M[N]
\]
where $\lambda.M$ is a nameless abstraction \emph{\`a la de~Bruijn}, and $[N]$ is a (suspended) \emph{explicit substitution} mapping the variable corresponding to the first dangling index in $M$ to $N$, and all the other variables to themselves. Terms in the form $M[s]$, representing closures, are syntactically part of $\lambda\sigma$, as opposed to substitutions $M\subst{x}{N}$, which are meta-operations that \emph{compute} a term.
In this section we formulate a first attempt at adding explicit substitutions to \CAUm. We will not prove any formal result for the moment, as our purpose is to elicit the difficulties of such a task. An immediate adaptation of $\lambda\sigma$-like explicit substitutions yields the following syntax:
\begin{center}
\begin{tabular}{lll}
\textbf{Terms} & $M,N$ & $::= 1 \orelse \lambda.M \orelse M~N \orelse \tlet(M,N) \orelse \bang_q M \orelse q \triangleright M \orelse \tinsp(\vartheta) \orelse M[s]$ 
\\
\textbf{Substitutions} & $s,t$ & $::= \id \orelse \lift \orelse s \circ t \orelse M \cdot s$
\end{tabular}
\end{center}
where $1$ is the first de~Bruijn index, the nameless $\lambda$ binds the first free index of its argument, and similarly the nameless $\tlet$ binds the first free index of its second argument. Substitutions include the identity (or empty) substitution $\id$, lift $\lift$ (which reinterprets all free indices $n$ as their successor $n+1$), the composition $s \circ t$ (equivalent to the sequencing of $s$ and $t$) and finally $M\cdot s$ (indicating a substitution that will replace the first free index with $M$, and other indices $n$ with their predecessor $n-1$ under substitution $s$). Trails are unchanged.

We write $M[N_1 \cdots N_k]$ as syntactic sugar for $M[N_1 \cdots N_k \cdot \id]$. Then, \CAUm reductions can be expressed as follows:
\begin{gather*}
(\lambda.M)~N \bred \qba \derives M[N]
\qquad
\tlet(\bang_q M,N) \bred \qbb \derives N[q \derives M] 
\\
\bang_q \cF[\tinsp(\vartheta)] \bred \bang_q \cF[\qti \derives q\vartheta]
\end{gather*}
(trail inspection, which does not use substitutions, is unchanged). The idea is that explicit substitutions make reduction more efficient because their evaluation does not need to be performed all at once, but can be delayed, partially or completely; delayed explicit substitutions applied to the same term can be merged, so that the term does not need to be scanned twice. The evaluation of explicit substitution can be defined by the following $\sigma$-rules:
\[
\begin{array}{rl@{\qquad}rl}
1[\id] & \sred 1 & \id \comp s & \sred s\\
1[M \cdot s] & \sred M & \lift \comp \id & \sred {\lift} \\
(\lambda M)[s] & \sred \lambda (M[1\cdot (s \comp \lift)]) & \lift \comp (M \cdot s) & \sred s 
\\
(M~N)[s] & \sred M[s]~N[s] 
&
(M \cdot s)\comp t & \sred M[t] \cdot (s \comp t) 
\\
(\bang_q M)[s] & \sred \bang_q (M[s]) 
&
(s_1 \comp s_2) \comp s_3 & \sred s_1 \comp (s_2 \comp s_3) 
\\
\tlet(M,N)[s] & \sred \tlet(M,N[1\cdot (s \comp \lift)]) 
&
(q \triangleright M)[s] & \sred q \triangleright (M[s]) 
\\
\tinsp(\{ \vect{M} \})[s] & \sred \tinsp(\{\vect{M[s]}\})
&
M[s][t] & \sred M[s \comp t]
\end{array}
\]
These rules are a relatively minor adaptation from those of $\lambda\sigma$: as in that language, $\sigma$-normal forms do not contain explicit substitutions, save for the case of the index $1$, which may be lifted multiple times, e.g.:
\[
1[\lift^n] = 1[\underbrace{\lift \circ \cdots \circ \lift}_{\mbox{$n$ times}}]
\]
If we take $1[\lift^n]$ to represent the de~Bruijn index $n+1$, as in $\lambda\sigma$, $\sigma$-normal terms coincide with a nameless representation of \CAUm.

The $\sigma$-rules are deferrable, in that we can perform $\beta$-reductions even if a term is not in $\sigma$-normal form. We would like to treat the $\tau$-rules in the same way, perhaps performing $\tau$-normalization only before trail inspection; however, we can see that changing the order of $\tau$-rules destroys confluence even when $\beta$-redexes are triggered in the same order. 
\begin{figure}[t]
\small{
\[
\xymatrix{
(\lambda.M~1~1)~(q \derives N) \ar@[->>][r]^\tau \ar[d]_\beta & \qapp(\qrefl,q) \derives (\lambda.M~1~1)~N \ar[d]_\beta \\
\qba \derives (M~1~1)[q \derives N] \ar@{->>}[d]_{\sigma\tau} & 
\qapp(\qrefl,q) \derives \qba \derives (M~1~1)[N]
\ar@{->>}[d]_{\sigma\tau} \\
\qtrans(\qba,\qapp(\qapp(\qrefl,q),q)) \derives M~N~N  & \qtrans(\qapp(\qrefl,q),\qba) \derives M~N~N 
}
\]
}\caption{Non-joinable reduction in \CAUm with na\"ive explicit substitutions}\label{fig:nonjoinable}
\end{figure}
Consider for example the reductions in Figure~\ref{fig:nonjoinable}: performing a $\tau$-step before the beta-reduction, as in the right branch, yields the expected result. If instead we delay the $\tau$-step, the trail $q$ decorating $N$ is duplicated by beta reduction; furthermore, the order of $q$ and $\qba$ gets mixed up: even though $q$ records computation that happened (once) \emph{before} $\qba$, the final trail asserts that $q$ happened (twice) \emph{after} $\qba$.\footnote{Although the right branch describes an unfaithful account of history, it is still a coherent one: we will explain this in more detail in the conclusions.} 
As expected, the two trails (and consequently the terms they decorate) are not joinable.

The example shows that $\beta$-reduction on terms whose trails have not been normalized is \emph{anachronistic}. If we separated the trails stored in a term from the underlying, trail-less term, we might be able to define a \emph{catachronistic}, or time-honoring version of $\beta$-reduction. For instance, if we write $\erase{M}$ for trail-erasure and $\trailify{M}$ for the trail-extraction of a term $M$, catachronistic beta reduction could be written as follows:
\begin{align*}
(\lambda.M)~N & \bred \qtrans(\trailify{(\lambda.M)~N},\qba) \derives \erase{M}[\erase{N}] 
\\
\tlet(\bang_q M,N) & \bred \qtrans(\trailify{\tlet(\bang_q M,N)},\qbb) \derives \erase{N}[q \derives M]
\\
\bang_q \cF[\tinsp(\vartheta)] & \bred \bang_q \cF[\qti \derives q'\vartheta] \qquad \mbox{(where $q' = \tau(\qtrans(q,\trailify{\cF[\tinsp(\vartheta)]}))$)}
\end{align*}
\begin{techreport}
Without any pretense of being formal, we can give a partial definition
of trail-erasure $\erase{\cdot}$ and trail-extraction $\trailify{\cdot}$, which we collectively refer to as \emph{trail projections}, as follows:
\begin{align*}
\erase{1} & = 1 & \trailify{1} & = \qrefl \\
\erase{1[\lift^n]} & = 1[\lift^n] & \trailify{1[\lift^n]} & = \qrefl \\
\erase{\lambda.M} & = \lambda.\erase{M} & \trailify{\lambda.M} & = \qlam(\trailify{M}) \\
\erase{M~N} & = \erase{M}~\erase{N} & \trailify{M~N} & = \qapp(\trailify{M},\trailify{N}) \\
\erase{\bang_q M} & = \bang_q M & \trailify{\bang_q M} & = \qrefl \\
\erase{\tlet(M,N)} & = \tlet(\erase{M},\erase{N}) & \trailify{\tlet(M,N)} & = \qlet(\trailify{M},\trailify{N}) \\
\erase{q \derives M} & = \erase{M} & \trailify{q \derives M} & = \qtrans(q,\trailify{M}) \\
\erase{\tinsp(\{\vect{M}\})} & = \tinsp(\{ \vect{\erase{M}} \}) & 
\trailify{\tinsp(\{\vect{M}\})} & = \qtr(\{ \vect{\trailify{M}} \}) 
\end{align*}
This definition is only partial: we do not say what to do when the term contains explicit substitutions. When computing, say, $\trailify{M[s]}$, the best course of action we can think of is to obtain the $\sigma$-normal form of $M[s]$, which is a pure \CAUm term with no explicit substitutions, and then proceed with its trail-extraction.

But the whole approach is clumsy: trail-erasure and trail-extraction are multi-step operations that need to scan their entire argument, even when it does not contain any explicit substitution. We would achieve greater efficiency if they could be broken up into sub-steps, much like we did with substitution.

Surely, to obtain this result we need a language in which terms and trails can mention trail-erasure and trail-extraction explicitly. This is the language that we will introduce in the next section.
\end{techreport}
\begin{proceedings}
We could easily define trail erasure and extraction as operations on pure \CAUm terms (without explicit substitutions), but the cost of eagerly computing their result would be proportional to the size of the input term; furthermore, the extension to explicit substitutions would not be straightforward. Instead, in the next section, we will describe an extended language to manipulate trail projections explicitly.
\end{proceedings}
\section{The calculus \CAUms}\label{CAUms}
We define the untyped Calculus of Audited Units with explicit substitutions, or \CAUms, as the following extension of the syntax of \CAUm presented in Section~\ref{CAUm}:
\begin{tabular}{rcl}
$M,N$ & $::=$ & $1 \orelse \lambda.M \orelse M~N \orelse \tlet(M,N) \orelse \bang_q M \orelse q \triangleright M \orelse \tinsp(\vartheta) \orelse M[s] \orelse \erase{M}$ 
\\
$q,q'$ & $::=$ & $\qrefl \orelse \qtrans(q,q') \orelse \qba \orelse \qbb \orelse \qti \orelse \qlam(q) \orelse \qapp(q,q') \orelse \qlet(q,q') \orelse \qtr(\zeta) \orelse \trailify{M}$ \\
$s,t$ & $::=$ & $\id \orelse \lift \orelse M \cdot s \orelse s \comp t$
\end{tabular}

\medskip

\CAUms builds on the observations about explicit substitutions we made in the previous section: in addition to closures $M[s]$, it provides syntactic trail erasures denoted by $\erase{M}$; dually, the syntax of trails is extended with the explicit trail-extraction of a term, written $\trailify{M}$.
In the na\"ive presentation, we gave a satisfactory set of $\sigma$-rules defining the semantics of explicit substitutions, which we keep as part of \CAUms. 
To express the semantics of explicit projections, we provide in Figure~\ref{fig:sigmaexp} rules stating that $\erase{\cdot}$ and $\trailify{\cdot}$ commute with most term constructors (but not with $\bang$) and are blocked by explicit substitutions.
These rules are completed by congruence rules asserting that they can be used in any subterm or subtrail of a given term or trail.

\begin{figure}
\[
\begin{array}{rl@{\qquad}rl}
\erase{1} & \sred 1 & \trailify{1} & \sred \qrefl \\
\erase{1[\lift^n]} & \sred 1[\lift^n] & \trailify{1[\lift^n]} & \sred \qrefl\\
\erase{\lambda.M} & \sred \lambda.\erase{M} & \trailify{\lambda.M} & \sred \qlam(\trailify{M}) \\
\erase{M~N} & \sred \erase{M}~\erase{N} & \trailify{M~N} & \sred \qapp(\trailify{M},\trailify{N}) \\
\erase{\bang_q M} & \sred \bang_q M & \trailify{\bang_q M} & \sred \qrefl \\
\erase{\tlet(M,N)} & \sred \tlet(\erase{M},\erase{N}) & \trailify{\tlet(M,N)} & \sred \qlet(\trailify{M},\trailify{N}) \\
\erase{q \derives M} & \sred \erase{M} & \trailify{q \derives M} & \sred \qtrans(q,\trailify{M}) \\
\erase{\tinsp(\{\vect{M}\})} & \sred \tinsp(\{ \vect{\erase{M}} \}) &
\trailify{\tinsp(\{\vect{M}\})} & \sred \qtr(\{ \vect{\trailify{M}} \}) 
\end{array}
\]
\caption{$\sigma$-reduction for explicit trail projections}
\label{fig:sigmaexp}
\end{figure}

The $\tau$ rules from Section~\ref{CAUm} are added to \CAUms with the obvious adaptations. We prove that $\sigma$ and $\tau$, together, yield a terminating and confluent rewriting system.

\begin{theorem}\label{thm:ST}
$(\sred \cup \tred)$ is terminating and confluent.
\end{theorem}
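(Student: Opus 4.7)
The plan is to prove termination first and then derive confluence via Newman's Lemma. The key observation driving both parts is that $\sred$ and $\tred$ interact in a very controlled way: the $\tau$-rules only rearrange or shrink trail structure and the local trail annotations $q \derives M$ embedded in terms, while the $\sigma$-rules either distribute explicit substitutions or push the new projection operations inward.

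More concretely, I would define a measure $\mu_\sigma$ on \CAUms expressions counting pending $\sigma$-work — occurrences of explicit substitutions, $\erase{\cdot}$, and $\trailify{\cdot}$ — weighted by the sizes of their operands, in the style of the standard $\lambda\sigma$ interpretation of Abadi et al. The $\lambda\sigma$ rules decrease $\mu_\sigma$ exactly as in the original termination proof; the new push rules for $\erase{\cdot}$ and $\trailify{\cdot}$ either propagate these operations to strictly smaller subterms or eliminate them at leaves, so they also strictly decrease $\mu_\sigma$. The crucial property is that $\tau$-rules leave $\mu_\sigma$ invariant, because they rewrite only trail structure and annotations $q \derives M$ and never create, duplicate, or remove an explicit substitution or projection. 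Combined with the previously stated termination of $\tred$ — which carries over to \CAUms since $\trailify{M}$ behaves as an opaque leaf to $\tau$ — a standard lexicographic argument yields termination: an infinite reduction would have to contain only finitely many $\sigma$-steps by well-foundedness of $\mu_\sigma$, and beyond the last one only $\tau$-steps remain, contradicting termination of $\tred$.

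For confluence, by Newman's Lemma it suffices to prove local confluence, which I would establish by enumerating critical pairs. The $\sigma/\sigma$ pairs inherited from $\lambda\sigma$ are standard; the new $\sigma/\sigma$ overlaps involving projections are routine (for example, $\erase{q \derives (q' \derives M)}$ is joinable by collapsing either the inner or outer annotation first, with both paths ending at $\erase{M}$). The $\sigma/\tau$ overlaps arise when $\tau$-rules like $\qrefl \derives M \tred M$, $q \derives (q' \derives M) \tred \qtrans(q,q') \derives M$, or $\lambda.(q \derives M) \tred \qlam(q) \derives \lambda.M$ fire inside the scope of a projection; in each case the two paths meet at the same $\sigma\tau$-normal form, sometimes requiring an extra $\tau$-cleanup (e.g.\ $\trailify{\qrefl \derives M}$ joins to $\trailify{M}$ via either one $\sigma$-step followed by $\qtrans(\qrefl,-) \tred -$, or by $\tau$ directly inside $\trailify{\cdot}$). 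The $\tau/\tau$ pairs are handled by confluence of $\tred$ on the base calculus, which extends to \CAUms because the new leaf form $\trailify{M}$ creates no new $\tau$-overlaps.

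The main obstacle I anticipate is the precise definition of $\mu_\sigma$: adapting a $\lambda\sigma$-style interpretation (e.g.\ the polynomial assignments of Curien–Hardin–Lévy) to absorb the new projection operations without breaking any existing case requires careful calibration, particularly for the $(\_)[s]$-commutation rule which must dominate any weight added by later $\erase$/$\trailify$ propagation. A secondary, more mechanical concern is the exhaustive enumeration of critical pairs: each is individually straightforward, but the enlarged rule set makes the bookkeeping substantial.
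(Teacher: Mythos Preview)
Your confluence argument --- terminate, then check local confluence by enumerating critical pairs, then apply Newman's Lemma --- is exactly the paper's approach. The difference is entirely on the termination side: the paper does not construct any measure at all but simply records that an automated termination prover (AProVE) succeeds on the combined rule set. Your lexicographic-measure argument is thus a genuinely different, more explanatory route; the paper's route costs one sentence but tells the reader nothing about \emph{why} $\sigma\cup\tau$ terminates, whereas yours (once completed) would.

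One point to tighten in your termination sketch: you claim $\tau$-rules leave $\mu_\sigma$ invariant because they never create, duplicate, or remove an explicit substitution or projection. That part is true, but since you propose to weight each explicit operation by the \emph{size} of its operand, a $\tau$-step occurring \emph{under} an enclosing $[s]$, $\erase{\cdot}$, or $\trailify{\cdot}$ can still alter $\mu_\sigma$ --- and several $\tau$-rules strictly increase raw node count (e.g.\ $(q\derives M)~N \tred \qapp(q,\qrefl)\derives M~N$ introduces a $\qapp$ and a $\qrefl$). So ``invariant'' is too strong for a size-weighted measure. What you actually need is either a weight that $\tau$ cannot increase (for instance, counting only explicit-operation and term-former nodes in the operand while ignoring all trail syntax, since $\tau$ never duplicates term material), or the weaker non-increase claim together with a check that the $\lambda\sigma$-style interpretation still strictly decreases under every $\sigma$-rule with that coarser weight. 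You already flag the calibration of $\mu_\sigma$ as the main obstacle, and this is precisely where it bites.
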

\begin{proof}
Tools like AProVE~\cite{aprove} are able to prove termination
automatically. Local confluence can be proved easily
by considering all possible pairs of rules: full confluence follows as
a corollary of these two results.
\end{proof}

\subsection{Beta reduction}
We replace the definition of $\beta$-reduction by the following lazy
rules that use trail-extraction and trail-erasure to ensure that the
correct trails are eventually produced:
\begin{align*}
(\lambda.M)~N & 
\Bred \qtrans(\qapp(\qlam(\trailify{M}),\trailify{N}),\qba) \derives \erase{M}[\erase{N}] 
\\
\tlet(\bang_q M,N) & 
\Bred \qtrans(\qlet(\qrefl,\trailify{N}),\qbb) \derives \erase{N}[q \derives M] 
\\
\bang_q \cF[\tinsp(\vartheta)] & 
\Bred \bang_q \cF[\qti \derives q'\vartheta] \qquad \mbox{(where $q' = \sigma\tau(\qtrans(q,\trailify{\cF[\tinsp(\vartheta)]}))$)}
\end{align*}
where $\cF$ specifies that the reduction cannot take place within a bang, a substitution, or a trail erasure:
\begin{align*}
\cF & ::= \blacksquare \orelse \lambda.\cF \orelse (\cF~N) \orelse (M~\cF) \orelse \tlet(\cF,N) \orelse \tlet(M,\cF) \orelse q \derives \cF \orelse \tinsp(\vect{M},\cF,\vect{N}) \orelse \cF[s]
\end{align*}
As usual, the relation is extended to inner subterms by means of congruence rules. 
However, we need to be 
careful: we cannot reduce within a trail-erasure, because if we did, the newly created trail would be erroneously erased:
\[\begin{array}{rl}
\mbox{wrong:} & \erase{(\lambda.M)~N} 
\\
& \Bred \erase{\qtrans(\qapp(\qlam(\trailify{M}),\trailify{N}),\qba) \derives \erase{M}[\erase{N}]} 
\\
& \sred \erase{\erase{M}[\erase{N}]} \smallskip \\
\mbox{correct:} & \erase{(\lambda.M)~N} 
\\
& \srred (\lambda.\erase{M})~\erase{N} 
\\
& \Bred \qtrans(\qapp(\qlam(\trailify{\erase{M}}),\trailify{\erase{N}}),\qba) \derives \erase{M}[\erase{N}]
\end{array}\]
This is why we express the congruence rule by means of contexts $\cE_\sigma$ such that holes cannot appear within erasures (the definition also employs substitution contexts $\cS_\sigma$ to allow reduction within substitutions):
\begin{prooftree}
\AxiomC{$M \Bred N$}
\UnaryInfC{$\cE_\sigma[M] \Bred \cE_\sigma[N]$}
\end{prooftree}
Formally, evaluation contexts are defined as follows:
\begin{definition}[evaluation context]
\begin{align*}
\cE_\sigma & ::= \blacksquare \orelse \lambda.\cE_\sigma \orelse (\cE_\sigma~N) \orelse (M~\cE_\sigma) \orelse \tlet(\cE_\sigma,N) 
  \orelse \tlet(M,\cE_\sigma) \orelse \bang_q \cE_\sigma \orelse q \derives \cE_\sigma
\\
& \orelse \tinsp(\{ \vect{M},\cE_\sigma,\vect{N} \}) 
  \orelse \cE_\sigma[s] \orelse M[\cS_\sigma] 
\\
\cS_\sigma & ::= \cS_\sigma \circ t \orelse s \circ \cS_\sigma \orelse \cE_\sigma \cdot s \orelse M \cdot \cS_\sigma
\end{align*}
\label{def:evalctx}
\end{definition}
We denote $\sigma\tau$-equivalence (the reflexive, symmetric, and transitive closure of $\stred$) by means of $\steeq$. As we will prove, $\sigma\tau$-equivalent \CAUms terms can be interpreted as the same \CAUm term: for this reason, we define reduction in \CAUms as the union of $\Bred$ and $\steeq$:
\[
{\CAUmsred} \mathop{:=} {\Bred} \mathop{\cup} {\steeq}
\]

\subsection{Properties of the rewriting system}\label{properties}
The main results we prove concern the relationship between \CAUm and \CAUms: firstly, every \CAUm reduction must still be a legal reduction within \CAUms; in addition, it should be possible to interpret every \CAUms reduction as a \CAUm reduction over suitable $\sigma\tau$-normal terms.

\begin{theorem}\label{thm:simuR}
If $M \CAUmrred N$, then $M \CAUmsrred N$.
\end{theorem}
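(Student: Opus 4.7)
The plan is to induct on the length of the reduction $M \CAUmrred N$; since $\CAUmrred$ is the reflexive transitive closure of $\CAUmred$, it suffices to simulate a single step $M \CAUmred M_1$. Each such step comes from a principal contraction $R \bred R'$ in some context $\cE$, giving $M = \cE[R]$ and $M_1 = \tau(\cE[R'])$. Because \CAUm contexts never mention explicit substitutions or erasures, every such $\cE$ is also a legitimate \CAUms evaluation context $\cE_\sigma$, so the same redex triggers a $\Bred$ step $M \Bred \cE[R'']$ in \CAUms, where $R''$ is the \CAUms contractum, differing from $R'$ by the presence of $\trailify{\cdot}$, $\erase{\cdot}$, extra congruence trails, and an explicit substitution in place of the meta-substitution used in $R'$. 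Since $\Bred \cup \steeq \subseteq \CAUmsred$, it then suffices to show $\cE[R''] \steeq M_1$.

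The technical core is an auxiliary lemma on ``pure'' \CAUm subterms, i.e.\ those containing no explicit substitutions, erasures, or trail extractions, and with no $q \derives$ annotations persisting within (a property guaranteed for every proper subterm of a $\tau$-normal \CAUm term, since any such annotation would be either absorbed by the nearest enclosing bang or permuted outwards). By induction on the structure of such a $P$, the rules in Figure~\ref{fig:sigmaexp} push both projections through every constructor except bang (where $\trailify{\bang_q M} \sred \qrefl$ and $\erase{\bang_q M} \sred \bang_q M$ directly), and the congruence trails so produced collapse via $\qlam(\qrefl), \qapp(\qrefl,\qrefl), \qlet(\qrefl,\qrefl), \qtr(\vect{\qrefl}) \trred \qrefl$, giving $\trailify{P} \strred \qrefl$ and $\erase{P} \srred P$. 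In parallel, the $\sigma$-rules inherited from $\lambda\sigma$ ensure that an explicit substitution $\erase{P}[\erase{Q}]$ between pure subterms $\sigma$-reduces to the nameless meta-substitution appearing in $R'$. Applying this lemma redex by redex, one checks $R'' \strred R'$: in the beta case the outer trail $\qtrans(\qapp(\qlam(\trailify{M}),\trailify{N}),\qba)$ collapses to $\qba$ and the closure to the substitution; the let-bang case is analogous with $\qbb$ replacing $\qba$ and $\qlet$ replacing $\qapp$; and for trail inspection the trail $q' = \sigma\tau(\qtrans(q, \trailify{\cF[\tinsp(\vartheta)]}))$ collapses to $q$ because $\cF$ is bang-free and pure, so $q'\vartheta = q\vartheta$ on the nose.

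Contextual closure of $\stred$ through the $\cE_\sigma$-context $\cE$ then yields $\cE[R''] \steeq \cE[R']$, and a further $\steeq$ step, corresponding to the finite sequence of $\tred$ steps that defines $\tau$-normalization, gives $\cE[R'] \steeq \tau(\cE[R']) = M_1$; combining and invoking the induction hypothesis yields $M \CAUmsrred N$. The hardest step I anticipate is the trail inspection case, because $q'$ is itself specified as a $\sigma\tau$-normal form of an expression involving $\trailify{\cF[\tinsp(\vartheta)]}$: verifying that this extraction collapses requires a secondary induction on $\cF$, using both that $\cF$ is bang-free (so no bang halts the collapse) and that $\tau$-normality of $M$ forbids any $q \derives$ on the hole's path. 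The remaining bookkeeping---checking that propagation of $\stred$ through binders and bangs in $\cE$ goes through---is routine once projections are known to commute with every non-bang constructor.
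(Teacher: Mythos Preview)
Your proposal is correct and follows essentially the same route as the paper. The paper packages the argument into Lemma~\ref{lem:bred2BredStred} (``if $M \bred N$ then $M \Bred \strred N$''), proved by induction on the $\bred$ derivation using Lemma~\ref{lem:subF2R} for the explicit-substitution-to-meta-substitution step; your auxiliary lemma that $\trailify{P} \strred \qrefl$ and $\erase{P}\srred P$ for pure $P$ is exactly the missing ingredient that the paper's terse proof leaves implicit (it is a special case of Lemma~\ref{lem:projF2R}), and your case analysis of the three principal contractions matches what that induction must do. Both arguments tacitly rely on the source term of a $\CAUmred$ step being $\tau$-normal, which is the intended reading of the calculus.
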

\begin{theorem}\label{thm:simu}
If $M \CAUmsrred N$, then $\sigma\tau(M) \CAUmrred \sigma\tau(N)$.
\end{theorem}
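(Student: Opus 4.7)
The plan is to proceed by induction on the length of the reduction sequence $M \CAUmsrred N$, reducing the inductive step to a one-step simulation: whenever $M \CAUmsred N$, $\sigma\tau(M) \CAUmrred \sigma\tau(N)$. Since $\CAUmsred = \Bred \cup \steeq$, this splits into two subcases. The $\steeq$ subcase is immediate from Theorem~\ref{thm:ST}: confluence and termination of $\sred \cup \tred$ ensure that $\sigma\tau$-equivalent terms have the same $\sigma\tau$-normal form, so $\sigma\tau(M) = \sigma\tau(N)$ and zero \CAUm steps suffice.

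The substantive subcase is $M \Bred N$, which I first analyze at the axiom level. For axiom (B1), with $M = (\lambda.M_1)~M_2$ and $N = \qtrans(\qapp(\qlam(\trailify{M_1}),\trailify{M_2}),\qba) \derives \erase{M_1}[\erase{M_2}]$, the key auxiliary lemma, which I would establish first, is that for every \CAUms term $P$ with $\sigma\tau$-normal form $\pure{P}$ we have $\sigma\tau(\trailify{P}) = \qrefl$ and $\sigma\tau(\erase{P}) = \pure{P}$. Indeed, $\pure{P}$ is $\sigma\tau$-normal and hence contains no local trail annotation $q \derives \cdot$ at the term level, since the $\tau$-permutation rules push any such annotation up into the nearest enclosing bang; bangs are opaque to both projections, so $\sigma\tau(\trailify{\pure{P}})$ builds only congruence trails over $\qrefl$, which collapse via $\qlam(\qrefl) \tred \qrefl$, $\qapp(\qrefl,\qrefl) \tred \qrefl$, $\qlet(\qrefl,\qrefl) \tred \qrefl$, and $\qtr(\{\vect{\qrefl}\}) \tred \qrefl$, while $\sigma\tau(\erase{\pure{P}})$ propagates through $\pure{P}$ leaving it unchanged. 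Applying this lemma to $N$, the outer trail reduces as $\qtrans(\qapp(\qlam(\qrefl),\qrefl),\qba) \trred \qba$, and the closure $\erase{M_1}[\erase{M_2}]$ normalizes to the $\sigma$-simulation of meta-substitution familiar from $\lambda\sigma$. Hence $\sigma\tau(N)$ coincides with the $\tau$-normal form of the \CAUm principal contraction applied to $\sigma\tau(M) = (\lambda.\pure{M_1})~\pure{M_2}$, giving $\sigma\tau(M) \CAUmred \sigma\tau(N)$. Axioms (B2) and (B3) proceed analogously; (B3) further relies on $q' = \sigma\tau(\qtrans(q,\trailify{\cF[\tinsp(\vartheta)]}))$ being exactly the trail that \CAUm would inspect after $\tau$-normalizing the history of the enclosing bang.

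The main obstacle is extending this to the contextual closure: for $M = \cE_\sigma[M_0] \Bred \cE_\sigma[N_0] = N$, the context $\cE_\sigma$ may contain explicit substitutions and trail annotations which, under $\sigma\tau$-normalization, propagate into the hole and reshape the surroundings of the redex. I plan to show by induction on the structure of $\cE_\sigma$ (following Definition~\ref{def:evalctx}) that $\sigma\tau(\cE_\sigma[M_0])$ factors as $\cE'[\sigma\tau(M_0[s])]$, where $\cE'$ is a pure \CAUm evaluation context and $s$ is an accumulated substitution determined by $\cE_\sigma$, and that the same $\cE'$ and $s$ describe $\sigma\tau(\cE_\sigma[N_0])$. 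A substitution lemma asserting that applying $s$ to a Beta-axiom redex and then contracting agrees (up to the trail bookkeeping absorbed above) with contracting first and then applying $s$ then allows one to invoke the axiom case at the hole in $\cE'$ and conclude $\sigma\tau(M) \CAUmrred \sigma\tau(N)$. Concatenating the single-step simulations along the original reduction yields the theorem.
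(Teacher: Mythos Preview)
Your auxiliary lemma is false, and the error propagates through the rest of the argument. You claim that a $\sigma\tau$-normal term $\pure{P}$ ``contains no local trail annotation $q \derives \cdot$ at the term level''. This is not the case: the $\tau$-rules push trail annotations \emph{upwards}, but if there is no enclosing bang they accumulate at the top of the term. For instance $(\lambda.(q\derives R))~S$ $\tau$-normalises to $\qapp(\qlam(q),\qrefl)\derives(\lambda.R)~S$, which is $\sigma\tau$-normal yet carries a nontrivial outer trail. Consequently $\sigma\tau(\trailify{P})$ need not be $\qrefl$; in general it is the (possibly nontrivial) head trail of $\sigma\tau(P)$, which is exactly what the paper packages as $\Trailify{P}$. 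Your reduction of the (B1) trail to $\qba$ therefore fails whenever $M_1$ or $M_2$ already carry history.

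The same phenomenon breaks your contextual-closure step. You want $\sigma\tau(\cE_\sigma[M_0]) = \cE'[\sigma\tau(M_0[s])]$ with the \emph{same} $\cE'$ working for $N_0$. But under $\tau$-normalisation the head trail of the hole leaks into the surrounding context: with $\cE_\sigma=\lambda.\blacksquare$ and $\sigma\tau(M_0)=q\derives M_0'$ one gets $\sigma\tau(\lambda.M_0)=\qlam(q)\derives\lambda.M_0'$, so $\cE'$ depends on $q$. After a Beta step the head trail of $N_0$ is different (it now records $\qba$, $\qbb$, or $\qti$), hence the required $\cE'$ changes and your ``same $\cE'$ and $s$'' claim does not hold. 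The paper flags precisely this difficulty (``the terms $\sigma\tau(M)$ and $\sigma\tau(N)$ are too normalized to provide us with a good enough induction hypothesis'') and solves it by working with \emph{focused forms} $\focus{M}=\Trailify{M}\derives\Erase{M}$: these always expose a head trail (even when it is $\qrefl$), so the trail stays at the hole rather than being absorbed into the context, and the induction on $\cE_\sigma$ goes through via an intermediate relation $\bbred$ on focused terms (Lemma~\ref{lem:betaF2R}) together with Lemma~\ref{lem:bbred2bred}.
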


Although \CAUms, just like \CAUm, is \emph{not} confluent (different reduction strategies produce different trails, and trail inspection can be used to compute on them, yielding different terms as well), the previous results allow us to use Hardin's interpretation technique \cite{hardin89} to prove a \emph{relativized} confluence theorem:
\begin{theorem}
If $M \CAUmsrred N$ and $M \CAUmsrred R$, and furthermore $\sigma\tau(N)$ and $\sigma\tau(R)$ are joinable in \CAUm, then $N$ and $R$ are joinable in $\CAUms$.
\end{theorem}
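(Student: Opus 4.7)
The plan is to apply Hardin's interpretation technique, routing the desired joinability through $\sigma\tau$-normal forms---where joinability is given by hypothesis---and transferring the result back using Theorem~\ref{thm:simuR}. The key preliminary observation is that $\sigma\tau$ is confluent and terminating (Theorem~\ref{thm:ST}), so for every \CAUms term $L$ the normal form $\sigma\tau(L)$ is well-defined; moreover, by inspecting the $\sigma$-rules (both those for explicit substitutions and those for explicit trail projections in Figure~\ref{fig:sigmaexp}), $\sigma$-normal forms contain no explicit substitutions (except as the iterated lifts $1[\lift^n]$ encoding de Bruijn indices) and no explicit trail projections, so $\sigma\tau(L)$ actually lives in the \CAUm sub-syntax. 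Since $\CAUmsred$ is defined as $\Bred \cup \steeq$ and $\stred$ is contained in $\steeq$, we also have $L \CAUmsrred \sigma\tau(L)$ for free, in a single $\CAUmsred$-step.

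With these observations in hand, the argument proceeds in three moves. First, I would reduce $N$ and $R$ to their $\sigma\tau$-normal forms inside \CAUms, obtaining $N \CAUmsrred \sigma\tau(N)$ and $R \CAUmsrred \sigma\tau(R)$. Second, I would invoke the joinability hypothesis to extract a common \CAUm-reduct $P$ with $\sigma\tau(N) \CAUmrred P$ and $\sigma\tau(R) \CAUmrred P$. Third, I would apply Theorem~\ref{thm:simuR} to lift each of these \CAUm-reductions to \CAUms-reductions $\sigma\tau(N) \CAUmsrred P$ and $\sigma\tau(R) \CAUmsrred P$. Concatenating yields $N \CAUmsrred \sigma\tau(N) \CAUmsrred P$ and $R \CAUmsrred \sigma\tau(R) \CAUmsrred P$, exhibiting $P$ as a common \CAUms-reduct, as required.

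The main obstacle is not the high-level diagram chase---which is standard once the interpretation is set up---but the implicit syntactic check that $\sigma\tau(L)$ is truly a \CAUm term, which is needed both for the \CAUm joinability hypothesis to be meaningful and for Theorem~\ref{thm:simuR} to apply. This requires going through the complete list of $\sigma$-rules and verifying that no rule left-hand side can survive in a $\sigma$-normal form outside the \CAUm syntax; it is routine but must be done carefully. It is worth remarking that Theorem~\ref{thm:simu} is not directly invoked in the chain of implications above---its conceptual role is upstream, guaranteeing that $\sigma\tau$-projection commutes with reduction and thereby underwriting the intuition that \CAUms faithfully refines \CAUm.
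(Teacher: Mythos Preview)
Your proposal is correct and follows essentially the same Hardin-style interpretation argument as the paper: normalize $N$ and $R$ to $\sigma\tau(N)$, $\sigma\tau(R)$ inside \CAUms, use the hypothesized \CAUm-joinability to obtain a common reduct, and lift the \CAUm-reductions back to \CAUms via Theorem~\ref{thm:simuR}. Your remark that Theorem~\ref{thm:simu} is not strictly needed for the diagram chase is accurate; the paper's Figure~\ref{fig:confluence} includes the top portion (through $\sigma\tau(M)$) for conceptual completeness, but the joinability conclusion relies only on the lower half, exactly as you outline.
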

\begin{proof}
See Figure~\ref{fig:confluence}.
\end{proof}

\begin{figure}[t]
\begin{center}
\(
\xymatrix{
& M \ar@{->>}[dl]_{\CAUms} \ar@{->>}[dr]^{\CAUms} \ar@{->>}[d]^{\sigma\tau} & \\
N \ar@{->>}[d]^{\sigma\tau} & \sigma\tau(M) \ar@{->>}[dl]^{\CAUm} \ar@{->>}[dr]_{\CAUm} & R \ar@{->>}[d]^{\sigma\tau}\\
\sigma\tau(N) \ar@{->>}[dr]_{\CAUm} & & \sigma\tau(R) \ar@{->>}[dl]^{\CAUm} \\
& S &
}
\)
\end{center}
\caption{Relativized confluence for $\CAUms$.}\label{fig:confluence}
\end{figure}

While the proof of Theorem~\ref{thm:simuR} is not overly different from the similar proof for the $\lambda\sigma$-calculus, Theorem~\ref{thm:simu} is more interesting. The main challenge is to prove that whenever $M \Bred N$, we have $\sigma\tau(M) \CAUmrred \sigma\tau(N)$. However, when proceeding by induction on $M \Bred N$, the terms $\sigma\tau(M)$ and $\sigma\tau(N)$ are too normalized to provide us with a good enough induction hypothesis: in particular, we would want them to be in the form $q \derives R$ even when $q$ is reflexivity. We call terms in this quasi-normal form \emph{focused}, and prove the theorem by reasoning on them. 
\begin{techreport}
The appendix contains the details of the proof.
\end{techreport}
\begin{proceedings}
The details of the proof are discussed in the extended version.
\end{proceedings}

\section{A call-by-value abstract machine}\label{SECD}
In this section, we 
derive an abstract machine implementing a weak call-by-value strategy.
More precisely, the machine will consider subterms shaped like $q \derives \erase{\pure{M}[e]}$, where $\pure{M}$ is a pure \CAUm term with no explicit operators, and $e$ is an \emph{environment}, i.e. an explicit substitution containing only values. 
In the tradition of lazy abstract machines, values are \emph{closures} (typically pairing a lambda and an environment binding its free variables); in our case, the most natural notion of closure also involves trail erasures and bangs:

\begin{tabular}{llcl}
\textbf{Closures} & $C$ & $::=$ & $\erase{(\lambda \pure{M})[e]} \orelse \bang_q C$ 
\\
\textbf{Values} & $V,W$ & $::=$ & $q \derives C$
\\
\textbf{Environments} & $e$ & $::=$ & $\id \orelse V \cdot e$
\end{tabular}

\medskip

According to this definition, the most general case of closure is a telescope of bangs, each equipped with a complete history, terminated at the innermost level by a lambda abstraction applied to an environment and enclosed in an erasure.
\[
\bang_{q_1} \cdots \bang_{q_n} \erase{(\lambda \pure{M})[e]}
\]
The environment $e$ contains values with dangling trails, which may be captured by bangs contained in $\pure{M}$; however, the erasure makes sure that none of these trails may reach the external bangs; thus, along with giving meaning to free variables contained in lambdas, closures serve the additional purpose of making sure the history described by the $q_1,\ldots,q_n$ is complete for each bang.

The machine we describe is a variant of the SECD machine. To simplify the description, the code and environment are not separate elements of the machine state, but they are combined, together with a trail, as the top item of the stack. Another major difference is that a code $\kappa$ can be not only a normal term without explicit operations, but also be a fragment of abstract syntax tree. The stack $\pi$ is a list of tuples containing a trail, a code, and an environment, and represents the subterm currently being evaluated (the top of the stack) and the unevaluated context, i.e. subterms whose evaluation has been deferred (the remainder of the stack). As a pleasant side-effect of allowing fragments of the AST into the stack, we never need to set aside the current stack into the dump: $D$ is just a list of values representing the evaluated context (i.e. the subterms whose evaluation has already been completed).

\begin{tabular}{llcl}
\textbf{Codes} & $\kappa$ & $::=$ & $\pure{M} \orelse @ \orelse \bang \orelse \tlet(\pure{M}) \orelse \tinsp$
\\
\textbf{Tuples} & $\tau$ & $::=$ & $(q|\kappa|e)$
\\
\textbf{Stack} & $\pi$ & $::=$ & $\vect{\tau}$
\\
\textbf{Dumps} & $D$ & $::=$ & $\vect{V}$
\\
\textbf{Configurations} & $\varsigma$ & $::=$ & $(\pi,D)$
\end{tabular}

\medskip

The AST fragments allowed in codes include application nodes $@$, bang nodes $\bang$, incomplete let bindings $\tlet(\pure{M})$, and inspection nodes $\tinsp$. A tuple $(q|\pure{M}|e)$ in which the code happens to be a term can be easily interpreted as $q \derives \erase{\pure{M}[e]}$; however, tuples whose code is an AST fragment only make sense within a certain machine state. The machine state is described by a configuration $\varsigma$ consisting of a stack and a dump.

\begin{figure}
\begin{center}
\begin{tabular}{cc}

\begin{minipage}{0.43\textwidth}

\begin{prooftree}
\AxiomC{$\jctx{(\epsilon,\epsilon)}$}
\end{prooftree}

\begin{prooftree}
\AxiomC{$\jctx{(\pi,D)}$}
\UnaryInfC{$\jctx{((q|\pure{M}|e)::(q'|@|\id)::\pi,D)}$}
\end{prooftree}

\begin{prooftree}
\AxiomC{$\jctx{(\pi,D)}$}
\UnaryInfC{$\jctx{((q|@|\id)::\pi,V::D)}$}
\end{prooftree}

\begin{prooftree}
\AxiomC{$\jctx{(\pi,D)}$}
\UnaryInfC{$\jctx{((q|\tlet(\pure{M})|e)::\pi,D)}$}
\end{prooftree}

\begin{prooftree}
\AxiomC{$\jctx{(\pi,D)}$}
\UnaryInfC{$\jctx{((q|\bang|\id)::\pi,D)}$}
\end{prooftree}

\begin{prooftree}
\AxiomC{$\jctx{(\pi,D)}$}
\UnaryInfC{$\jctx{\left(
	\begin{array}{l}
	\vect{(q_i|\pure{M_i}|e_i)_{i = k+1,\ldots,9}}::
	\\
	\quad(q'|\tinsp|\id)::\pi, \\
	\vect{V_{\{j = 1,\ldots,k-1\}}}::D
	\end{array}
	\right)
}$}
\end{prooftree}

\end{minipage}

&

\begin{minipage}{0.43\textwidth}

\begin{prooftree}
\AxiomC{$\jtm{(\epsilon,V::\epsilon)}$}
\end{prooftree}

\begin{prooftree}
\AxiomC{$\jctx{(\pi,D)}$}
\UnaryInfC{$\jtm{((q|\pure{M}|e)::\pi,D)}$}
\end{prooftree}

\begin{prooftree}
\AxiomC{$\jctx{(\pi,D)}$}
\UnaryInfC{$\jtm{((q|@|\id)::\pi,W::V::D)}$}
\end{prooftree}

\begin{prooftree}
\AxiomC{$\jctx{(\pi,D)}$}
\UnaryInfC{$\jtm{((q|\tlet(\pure{M})|e)::\pi,V::D)}$}
\end{prooftree}

\begin{prooftree}
\AxiomC{$\jctx{(\pi,D)}$}
\UnaryInfC{$\jtm{((q|\bang|\id)::\pi,V::D)}$}
\end{prooftree}

\begin{prooftree}
\AxiomC{$\jctx{(\pi,D)}$}
\UnaryInfC{$\jctx{((q|\tinsp|\id)::\pi,\vect{V_9}::D)}$}
\end{prooftree}
\end{minipage}

\end{tabular}
\end{center}
\caption{Term and context configurations}
\label{fig:wfcfg}
\end{figure}

A meaningful state cannot contain just any stack and dump, but must have a certain internal coherence, which we express by means of the two judgments in Figure~\ref{fig:wfcfg}: in particular, the machine state must be a \emph{term configuration}; this notion is defined by the judgment $\jtm{\varsigma}$, which employs a separate notion of \emph{context configuration}, described by the judgment $\jctx{\varsigma}$.

We can define the denotation of configurations by recursion on their well-formedness judgment:

\begin{definition}\mbox{}

\begin{enumerate}
\item The denotation of a context configuration is defined as follows:
\begin{align*}
\dec{(\epsilon,\epsilon)} & \triangleq \Hole
\\
\dec{((q|\pure{M}|e)::(q'|@|\id)::\pi,D)} & \triangleq \dec{(\pi,D)}[q' \derives (\Hole~(q \derives \erase{\pure{M}[e]}))]
\\
\dec{((q|@|\id)::\pi,V::D)} & \triangleq \dec{(\pi,D)}[q \derives (V~\Hole)]
\\
\dec{((q|\tlet(\pure{M})|e)::\pi,D)} & \triangleq \dec{(\pi,D)}[q \derives \tlet(\Hole, \erase{\pure{M}[1 \cdot (e \circ \uparrow)]})]
\\
\dec{((q|\bang|\id)::\pi,D)} & \triangleq \dec{(\pi,D)}[q \derives \bang \Hole]
\\
\dec{(\vect{(q_i|\pure{M_i}|e_i)}::(q'|\tinsp|\id)::\pi,\vect{V_j}::D)} & \triangleq \dec{(\pi,D)}[q' \derives \tinsp(\vect{V_j},\Hole,\vect{(q_i \derives \erase{\pure{M_i}[e_i]})})]
\end{align*}
where in the last line $i + j + 1 = 9$.

\item The denotation of a term configuration is defined as follows:
\begin{align*}
\tmdec{\epsilon,V::\epsilon} & \triangleq V
\\
\tmdec{(q|\pure{M}|e)::\pi,D} & \triangleq \dec{(\pi,D)}[q \derives \erase{\pure{M}[e]}]
\\
\tmdec{(q|@|\id)::\pi,W::V::D} & \triangleq \dec{(\pi,D)}[q \derives (V~W)]
\\
\tmdec{(q|\tlet(\pure{M})|e)::\pi,V::D} & \triangleq \dec{(\pi,D)}[q \derives \tlet(V, \erase{\pure{M}[1 \cdot (e \circ \uparrow)])]}
\\
\tmdec{(q|\bang|\id)::\pi,V::D} & \triangleq \dec{(\pi,D)}[q \derives \bang V]
\\
\tmdec{(q|\tinsp|\id)::\pi,\vect{V_9}::D} & \triangleq \dec{(\pi,D)}[q \derives \tinsp(\vect{V_9})]
\end{align*}
\end{enumerate}
\end{definition}

We see immediately that the denotation of a term configuration is a \CAUms term, while that of a context configuration is a \CAUms context (Definition~\ref{def:evalctx}).

\begin{figure}[t]
\hspace{-1.2cm}
\scriptsize{
\( \displaystyle
\begin{array}{|r||c|c||c|c|}
\hline
&
\multicolumn{2}{c@{\mapsto}}{
\mbox{source}
}
&
\multicolumn{2}{c|}{
\mbox{target}
}
\\
\hline
1 &
(q|\pure{M}~\pure{N}|e)::\pi & D 
& 
(\qrefl|\pure{M}|e)::(\qrefl|\pure{N}|e)::(q|@|\id)::\pi & D
\\
2 &
(q|@|\id)::\pi & (q' \derives C)::(q'' \derives \erase{(\lambda \pure{M})[e]})::D
& 
(q;\qapp(q'',q');\qba|\pure{M}|(\qrefl \derives C) \cdot e)::\pi & D
\\
3 &
(q|\lambda \pure{M}|e)::\pi & D
&
\pi & (q \derives \erase{(\lambda \pure{M})[e]})::D
\\
4 &
(q|\tlet(\pure{M},\pure{N})|e)::\pi & D
&
(\qrefl|\pure{M}|e)::(q|\tlet(\pure{N})|e)::\pi & D
\\
5 &
(q|\tlet(\pure{N})|e)::\pi & (q' \derives \bang V)::D
&
(q;\qlet(q',\qrefl);\qbb;q_{\pure{N},e,V}|\pure{N}|V \cdot e)::\pi & D
\\
6 &
(q|\bang_{q'} \pure{M}|e)::\pi & D
&
(q';\trailify{\pure{M}[e]}|\pure{M}|e)::(q|\bang|\id)::\pi & D
\\
7 &
(q|\bang|\id)::\pi & V::D
&
\pi & (q \derives \bang V)::D
\\
8 &
(q|\tinsp(\vect{\pure{M_9}})|e)::\pi & D
&
\vect{(\qrefl|\pure{M_i}|e)_{i=1,\ldots,9}}::(q|\tinsp|\id)::\pi & D
\\
9 &
(q|\tinsp|\id)::\pi & \vect{(q_i \derives C_i)_{i=1,\ldots,9}}::D
&
(q;\qtr(\vect{q_i});\qti|J_{q,\vect{q_i},\pi,D}|[\vect{(\qrefl \derives C_i)}])::\pi & D
\\
10 &
(q|n|e)::\pi & D 
&
\pi & (q \derives e(n))::D
\\
\hline
\end{array}
\)
}
\[
\begin{array}{rl}
q_{\pure{N},e,V} & \triangleq \trailify{\erase{\pure{N}[1\cdot(e \circ \uparrow)]}[V]}
\\
J_{q,\vect{q_i},\pi,D} & \triangleq \cI((q;\qtr(\vect{q_i})),\pi,D)
\\
e(n) & \triangleq \left\{
\begin{array}{ll}
C & \mbox{if $e = (q \derives C) \cdot e'$ and $n = 1$}
\\
e'(m) & \mbox{if $e = V \cdot e'$ and $n = m + 1$}
\end{array}
\right.
\end{array}
\]
\caption{Call-by-value abstract machine}
\label{fig:machine}
\end{figure}

\begin{figure}[b]
\begin{align*}
\cI(q_\vartheta,(q'|\bang|\epsilon)::\pi,D) = & 
\sigma\tau(q_\vartheta)
\\
\cI(q_\vartheta,(q'|\pure{M}|e)::(q''|@|\epsilon)::\pi,D) = & \cI((q'';\qapp(q_\vartheta,q')),\pi,D)
\\
\cI(q_\vartheta,(q'|@|\id)::\pi,(q'' \derives C)::D) = & \cI((q',\qapp(q'',q_\vartheta)),\pi,D)
\\
\cI(q_\vartheta,(q'|\tlet(\pure{M})|e)::\pi,D) = & \cI((q';\qlet(q_\vartheta,\qrefl)),\pi,D)
\\
\cI(q_\vartheta,\vect{(q_i|M_i|e_i)}::(q'|\tinsp|\id)::\pi,\vect{(q_j \derives C_j)}::D) = & 
\cI((q';\qtr(\vect{q_j},q_\vartheta,\vect{q_i})),\pi,D)
\end{align*}
\caption{Materialization of trails for inspection}
\label{fig:materialize}
\end{figure}

The call-by-value abstract machine for \CAUm is shown in Figure~\ref{fig:machine}: in this definition we use semi-colons as a compact notation for sequences of transitivity trails. The evaluation of a pure, closed term $\pure{M}$, starts with an empty dump and a stack made of a single tuple $(\qrefl,\pure{M},\id)$: this is a term configuration denoting $\qrefl \derives \erase{\pure{M}[\id]}$, which is $\sigma\tau$-equivalent to $\pure{M}$. Final states are in the form $\epsilon,V::\epsilon$, which simply denotes the value $V$. When evaluating certain erroneous terms (e.g. $(\bang~M)~V$, where function application is used on a term that is not a function), the machine may get stuck in a non-final state; these terms are rejected by the typed \CAU.
The advantage of our machine, compared to a naive evaluation strategy, is that in our case all the principal reductions can be performed in constant time, except for trail inspection which must examine a full trail, and thus will always require a time proportional to the size of the trail.

Let us now examine the transition rules briefly. Rules 1-3 and 10 closely match the ``Split CEK'' machine~\cite{Accattoli2014} (a simplified presentation of the SECD machine), save for the use of the $@$ code to represent application nodes, while in the Split CEK machine they are expressed implicitly by the stack structure. 

Rule 1 evaluates an application by decomposing it, placing two new tuples on the stack for the subterms, along with a third tuple for the application node; the topmost trail remains at the application node level, and two reflexivity trails are created for the subterms; the environment is propagated to the subterm tuples. 

The idea is that when the machine reaches a state in which the term at the top of the stack is a value (e.g. a lambda abstraction, as in rule 3), the value is moved to the dump, and evaluation continues on the rest of the stack. Thus when in rule 2 we evaluate an application node, the dump will contain two items resulting from the evaluation of the two subterms of the application; for the application to be meaningful, the left-hand subterm must have evaluated to a term of the form $\lambda \pure{M}$, whereas the form of the right-hand subterm is not important: the evaluation will then continue as usual on $\pure{M}$ under an extended environment; the new trail will be obtained by combining the three trails from the application node and its subexpressions, followed by a $\qba$ trail representing beta reduction.

The evaluation of $\tlet$ works similarly to that of applications; however, a term $\tlet(\pure{M},\pure{N})$ is split intro $\pure{M}$ and $\tlet(\pure{N})$ (rule 4), so that $\pure{N}$ is never evaluated independently from the corresponding $\tlet$ node. When in rule 5 we evaluate the $\tlet(\pure{N})$ node, the dump will contain a value corresponding to the evaluation of $\pure{M}$ (which must have resulted in a value of the form $\bang V$): we then proceed to evaluate $\pure{N}$ in an environment extended with $V$; this step corresponds to a principal contraction, so we update the trail accordingly, by adding $\qbb$; additionally, we need to take into account the trails from $V$ after substitution into $\pure{N}$: we do this by extending the trail with $\trailify{\erase{\pure{N}[1 \cdot (e \circ \uparrow)]}[V]}$.

Bangs are managed by rules 6 and 7. To evaluate $\bang_{q'} \pure{M}$, we split it into $\pure{M}$ and a $\bang$ node, placing the corresponding tuples on top of the stack; the original external trail $q$ remains with the $\bang$ node, whereas the internal trail $q'$ is placed in the tuple with $\pure{M}$; the environment $e$ is propagated to the body of the bang but, since it may contain trails, we need to extend $q'$ with the trails resulting from substitution into $\pure{M}$. When in rule 7 we evaluate the $\bang$ node, the top of the dump contains the value $V$ resulting from the evaluation of its body: we update the dump by combining $V$ with the bang and proceed to evaluate the rest of the stack.

The evaluation of trail inspections (rules 8 and 9) follows the same principle as that of applications, with the obvious differences due to the fact that inspections have nine subterms. The principal contraction happens in rule 9, which assumes that the inspection branches have been evaluated to $q_1 \derives C_1,\ldots,q_9 \derives C_9$ and put on the dump: at this point we have to reconstruct and normalize the inspection trail and apply the inspection branches. To reconstruct the inspection trail, we combine $q$ and the $\vect{q_i}$ into the trail for the current subterm $(q;\qtr(\vect{q_i}))$; then we must collect the trails in the context of the current bang, which are scattered in the stack and dump: this is performed by the auxiliary operator $\cI$ of Figure~\ref{fig:materialize}, defined by recursion on the well-formedness of the context configuration $\pi,D$; the definition is partial, as it lacks the case for $\epsilon,\epsilon$, corresponding to an inspection appearing outside all bangs: such terms are considered ``inspection-locked'' and cannot be reduced.
Due to the operator $\cI$, rule 9 is the only rule that cannot be performed in constant time.

$\cI$ returns a $\sigma\tau$-normalized trail, which we need to apply to the branches $C_1,\ldots,C_9$; from the implementation point of view, this operation is analogous to a substitution replacing the trail nodes ($\qrefl, \qtrans, \qba, \qapp, \qlam, \ldots$) with the respective $M_i$. Suppose that trails are represented as nested applications of dangling de~Bruijn indices from $1$ to $9$ (e.g. the trail $\qapp(\qrefl,\qba)$ can be represented as $(1~2~3)$ for $\qapp=1$, $\qrefl=2$ and $\qba=3$); then trail inspection reduction amounts to the evaluation of a trail in an environment composed of the trail inspection branches. To sum it up, rule 9 produces a state in which the current tuple contains:
\begin{itemize}
\item a trail $(q;\qtr(\vect{q_i});\qti)$ (combining the trail of the inspection node, the trails of the branches, and the trail $\qti$ denoting trail inspection
\item the $\sigma\tau$-reduced inspection ``trail'' (operationally, an open term with nine dangling indices) which results from $\cI((q;\qtr(\vect{q_i})),\pi,D)$
\item an environment $[\vect{(\qrefl \derives C_i)}]$ which implements trail inspection by substituting the inspection branches for the dangling indices in the trail.
\end{itemize}

The machine is completed by rule 10, which evaluates de~Bruijn indices by looking them up in the environment. Notice that the lookup operation $e(n)$, defined when the de~Bruijn index $n$ is closed by the environment $e$, simply returns the $n$-th closure in $e$, but \emph{not} the associated trail; the invariants of our machine ensure that this trail is considered elsewhere (particularly in rules 5 and 6).

The following theorem states that the machine correctly implements reduction.
\begin{theorem}\label{thm:SECDsound}
For all valid $\varsigma$, $\varsigma \mapsto \varsigma'$ implies $\tmdec{\varsigma} \CAUmsrred \tmdec{\varsigma'}$.
\end{theorem}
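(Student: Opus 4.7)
The plan is to proceed by case analysis on the transition rule $\varsigma \mapsto \varsigma'$ from Figure~\ref{fig:machine}, showing in each case that $\tmdec{\varsigma}$ and $\tmdec{\varsigma'}$ are related by a composition of $\Bred$ and $\steeq$ steps (both of which are contained in $\CAUmsrred$). For each rule we first unfold the denotations $\tmdec{\cdot}$ and $\dec{\cdot}$ using the defining clauses; crucially, the well-formedness judgments guarantee that the unfoldings match the patterns expected by the denotation definition, so the expansions are always legal.

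The cases split naturally into three groups. The \emph{decomposition} rules (1, 4, 6, 8), which push subterms onto the stack, and the \emph{value-passing} rules (3, 7, 10), which move an evaluated value onto the dump, are purely administrative: for each we expand $\tmdec{\varsigma}$ and $\tmdec{\varsigma'}$ and observe that they denote the same \CAUms term up to $\sigma$-propagation of the substitution through the term constructor being decomposed, plus in rule~6 a $\tau$-step combining the outer trail with $\trailify{\pure{M}[e]}$ to preserve the bang's log (which is justified by the $\sigma$-rule for $\trailify{\bang_q M}$ and the $\tau$-rule $\bang_q(q' \derives M) \tred \bang_{\qtrans(q,q')} M$). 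For rule~10, the lookup $e(n)$ is shown equivalent to $n[e]$ by iterated application of the $\sigma$-rules for $1[M\cdot s]$ and $\lift \comp (M \cdot s)$. In all these cases the two denotations are $\steeq$-equivalent, hence related by $\CAUmsrred$.

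The \emph{principal} rules (2, 5, 9) each correspond to a single $\Bred$ step, followed at most by $\sigma\tau$-equivalence. For rule~2, the denotation of the source configuration is $\dec{(\pi,D)}[q \derives (W~V)]$ where $V = q'' \derives \erase{(\lambda \pure{M})[e]}$ and $W = q' \derives C$; applying the $\Bred$ rule for $(\lambda.\pure{M}[e])\,C$ (inside the context opened by the outer trail annotations) produces exactly the trail combination $q;\qapp(q'',q');\qba$ on top of $\erase{\pure{M}[(\qrefl \derives C)\cdot e]}$, which is the denotation of the target up to the usual $\sigma$-commutations between $\erase{\cdot}$ and substitution. Rule~5 is analogous, using the $\Bred$ rule for $\tlet(\bang_q M, N)$; the extra summand $q_{\pure{N},e,V}$ in the trail is precisely $\trailify{\erase{\pure{N}[1\cdot (e\circ \lift)]}[V]}$, which matches the trail-extraction prefix produced by the $\Bred$ rule once we $\sigma$-normalize through the explicit substitution.

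The delicate case, and the main obstacle, is rule~9. Here the $\Bred$ rule for trail inspection requires the complete history of the enclosing bang, namely $\sigma\tau(\qtrans(q_{\text{bang}}, \trailify{\cF[\tinsp(\vartheta)]}))$ where $\cF$ is the bang-free evaluation context around the inspection. In the machine, this history is scattered across the stack and dump between the current tuple and the nearest enclosing $\bang$ node. The key lemma to prove is that the auxiliary operator $\cI$ of Figure~\ref{fig:materialize} correctly assembles this trail: specifically, one shows by induction on the well-formedness derivation of the context configuration $(\pi,D)$ that whenever $\dec{(\pi,D)}$ decomposes as $\cE'[\bang_{q_{\text{bang}}}\cF[\blacksquare]]$ with $\cF$ bang-free, then $\cI(q_\vartheta, \pi, D) \steeq \sigma\tau(\qtrans(q_{\text{bang}}, \trailify{\cF[q_\vartheta \derives \tinsp(\vartheta)]}))$. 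Given this invariant, the target configuration of rule~9 denotes exactly the contractum produced by $\Bred$ on $\tmdec{\varsigma}$, interpreting the environment $[\vect{(\qrefl \derives C_i)}]$ on the open trail term returned by $\cI$ as the structural-recursion application $q'\vartheta$; this correspondence is verified by induction on the trail structure, using the $\sigma$-rules for substitution and the de~Bruijn encoding of trail constructors. Concluding each case, $\tmdec{\varsigma} \Bred M \steeq \tmdec{\varsigma'}$ for some intermediate $M$, whence $\tmdec{\varsigma} \CAUmsrred \tmdec{\varsigma'}$.
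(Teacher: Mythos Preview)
Your approach is essentially the paper's: case analysis on the transition rules, with the administrative rules handled by $\sigma\tau$-equivalence and the principal rules (2, 5, 9) by a single $\Bred$ step plus $\sigma\tau$; the paper packages the latter as a separate lemma of admissible rewritings (Lemma~\ref{lem:admissible}) and isolates your invariant about $\cI$ as Lemma~\ref{lem:auxinsp}, but the content is the same.

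One point needs correction. For rule~10 you claim that $e(n)$ is $\sigma$-equivalent to $n[e]$, but this is false: the $\sigma$-rule $1[M\cdot s]\sred M$ returns the \emph{whole} value $q\derives C$ stored in the environment, whereas $e(n)$ by definition drops the trail and returns only $C$. What actually appears in the denotation is $q\derives\erase{n[e]}$, and it is the erasure that discards the environment's trail, so the statement you need is $\erase{n[e]}\CAUmsrred e(n)$ (the paper records this as Lemma~\ref{lem:lookup}). Without the erasure your argument for rule~10 does not go through; with it, the rest of your proof is fine. (A small notational slip: in your rule~2 discussion you write the application as $W~V$ with $W=q'\derives C$, which would apply the argument to the function; the denotation clause gives $V~W$ with the lambda on the left.)
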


\section{Conclusions and Future Directions}\label{sec:concl}
The calculus \CAUms which we introduced in this paper provides a finer-grained view over the reduction of history-carrying terms, 
and proved an effective tool in the study of the smarter evaluation techniques which we implemented in an abstract machine.
\CAUms is not limited to the call-by-value strategy used by our machine, and in future work we plan to further our investigation of efficient auditing to call-by-name and call-by-need.
%
%
Another intriguing direction we are exploring is to combine our
approach with recent advances in explicit substitutions, such as the
linear substitution calculus of Accattoli and
Kesner~\cite{Accattoli2010}, and apply the \emph{distillation}
technique of Accattoli et al.~\cite{Accattoli2014} 

In our discussion, we showed that the original definition of beta-reduction, when applied to terms that are not in trail-normal form, creates temporally unsound trails. We might wonder whether these anachronistic trails carry any meaning: let us take, as an example, the reduction on the left branch of Figure~\ref{fig:nonjoinable}:
\[
(\lambda.M~1~1)~(q \derives N) \rred{} \qtrans(\qba,\qapp(\qapp(\qrefl,q),q)) \derives M~N~N
\]
We know that $q$ is the trace left behind by the reduction that led to $N$ from the original term, say $R$:
\[
R \longrightarrow q \derives N
\]
We can see that the anachronistic trail is actually consistent with the reduction of $(\lambda.M~1~1)~R$ under a leftmost-outermost strategy:
\begin{align*}
& (\lambda.M~1~1)~R \longrightarrow \qba \derives M~R~R \rred{} \qba \derives M~(q\derives N)~(q\derives N) \\
& \qquad \rred{} \qtrans(\qba,\qapp(\qapp(\qrefl,q),q)) \derives M~N~N
\end{align*}
Under the anachronistic reduction, $q$ acts as the witness of an original inner redex. Through substitution within $M$, we get evidence that the contraction of an inner redex can be swapped with a subsequent head reduction: this is a key result in the proof of standardization that is usually obtained using the notion of \emph{residual} (\cite{Barendregt}, Lemma~11.4.5). Based on this remark, we conjecture that trails might be used to provide a more insightful proof: it would thus be interesting to see how trails relate to recent advancements in standardization~(\cite{AccattoliNonstandard,Asperti2013,Xi1999,Kashima2000}).

\paragraph{Acknowledgments.}
Effort sponsored by the Air Force Office of Scientific Research, Air Force Material Command, USAF, under grant number FA8655-13-1-3006. The U.S. Government and University of Edinburgh are authorised to reproduce and distribute reprints for their purposes notwithstanding any copyright notation thereon. Cheney was also supported by ERC Consolidator Grant Skye (grant number 682315). We are grateful to James McKinna and the anonymous reviewers for comments.

\bibliographystyle{splncs04}
\bibliography{expsub-cau}

\newpage
\appendix

\section{Proofs about \CAUms}
In this appendix, we provide more detail about the proofs mentioned in Section~\ref{properties}. A first useful result, expressed by the following lemma, shows that $\sigma$-normal terms coincide with the nameless
variant of $\CAUm$ outlined in Section~\ref{naive}:
\begin{lemma}
The $\sigma$-normal terms, trails and substitutions of \CAUms are expressed by the following grammar:
\[
\begin{array}{rl}
M, N & ::= 1 \orelse 1[\lift^n] \orelse \lambda.M \orelse M~N \orelse \tlet(M,N) \orelse \bang_q M \orelse q \derives M \orelse \tinsp(\vartheta) \\
q, q' & ::= \qrefl \orelse \qtrans(q,q') \orelse \qba \orelse \qbb \orelse \qti \orelse \qlam(q) \orelse \qapp(q, q') \orelse \qlet(q,q') \orelse \qtr(\zeta) 
\\
s, t & ::= \id \orelse \lift^n \orelse M \cdot s
\end{array}
\]
\end{lemma}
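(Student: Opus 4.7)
The plan is to prove a two-way inclusion by simultaneous structural induction on terms, trails, and substitutions, with a case analysis against the $\sigma$-rules of Section~\ref{naive} together with those of Figure~\ref{fig:sigmaexp}. Direction one (soundness) shows every expression generated by the grammar is $\sigma$-irreducible; direction two (completeness) shows every $\sigma$-normal \CAUms expression matches the grammar.

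For soundness, I would inspect each left-hand side of a $\sigma$-rule and observe that it cannot match any grammar production. The closures permitted by the grammar are restricted to $1[\lift^n]$, and none of the closure rules apply to these: $1[\id] \sred 1$ and $1[M \cdot s] \sred M$ require $s$ to have a specific shape, while the remaining closure rules require the enclosed term to be an abstraction, application, bang, let-binding, trail-annotated term, inspection, or itself a closure. The grammar's substitutions $\id$, $\lift^n$, and $M \cdot s$ similarly avoid every composition-reduction redex, and the grammar omits $\erase{}$ and $\trailify{}$ altogether, so the commutation rules of Figure~\ref{fig:sigmaexp} never apply.

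For completeness, the interesting cases are the four constructors that can vanish under $\sigma$-reduction: closures $M[s]$, compositions $s \circ t$, erasures $\erase{M}$, and trail-extractions $\trailify{M}$. The erasure and extraction cases are immediate, since every term constructor of \CAUms appears on the left-hand side of some rule in Figure~\ref{fig:sigmaexp}: hence $\erase{M}$ and $\trailify{M}$ are never $\sigma$-normal and cannot appear in any normal term or trail. For a $\sigma$-normal closure $M[s]$, the closure rules force $M$ to be the index $1$, and then $1[\id] \sred 1$ and $1[M' \cdot s'] \sred M'$ force $s$ to be neither $\id$ nor a cons; by the induction hypothesis on substitutions, $s$ must therefore be $\lift^n$ for some $n \geq 1$. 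For a $\sigma$-normal composition $s \circ t$, the head $s$ cannot be $\id$, a cons, or a nested composition (each triggering a distinct rule), so $s = \lift$; the rules $\lift \circ \id \sred \lift$ and $\lift \circ (M' \cdot s') \sred s'$ then force $t$ to be either $\lift$ or again a composition, which inductively equals some $\lift^n$, giving $s \circ t = \lift^{n+1}$.

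The main obstacle is the careful bookkeeping in the substitution case: one must simultaneously track which shapes on either side of $\circ$ trigger a reduction and confirm that after all distributivity and right-associativity steps have been exhausted, only right-associated chains of $\lift$ remain. Once this is settled, the other cases follow by direct inspection, and the lemma recovers the observation foreshadowed at the end of Section~\ref{naive}, namely that $\sigma$-normal \CAUms coincides with the nameless presentation of \CAUm, reading $1[\lift^n]$ as the de~Bruijn index $n+1$.
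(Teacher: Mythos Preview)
Your approach is correct and is the standard way to establish such a characterisation of normal forms; the paper itself states this lemma without proof, treating it as routine, so there is nothing to compare against. One small tightening: in the completeness direction for $\erase{M}$ and $\trailify{M}$, it is not literally true that \emph{every} \CAUms term constructor appears on the left of a rule in Figure~\ref{fig:sigmaexp} (there is no rule for $\erase{M'[s]}$ with general $s$, nor for $\erase{\erase{M'}}$); the argument really needs the induction hypothesis on the subterm $M$, which forces $M$ into the restricted grammar first, after which every remaining shape does have a matching rule. You set up the induction correctly at the outset, so this is only a matter of phrasing.
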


\noindent From this result, we extract a definition of $\sigma$-normal contexts $\cE$ and $\cS$:
\begin{align*}
\cE & ::= \blacksquare \orelse \lambda.\cE \orelse (\cE~N) \orelse (M~\cE) \orelse \tlet(\cE,N) \orelse \tlet(M,\cE) \orelse \bang_q \cE \orelse q \derives \cE 
\orelse \tinsp(\{ \vect{M},\cE,\vect{N} \}) 
\\
\cS & ::= \cE \cdot s \orelse M \cdot \cS 
\end{align*}
where all the terms $M,N$, trails $q$ and substitutions $s$ appearing in these definitions are $\sigma$-normal.

We define meta-level projections and focused forms, and then we prove some theorems about them.

\begin{definition}
The meta-level projections $\Erase{M}$ and $\Trailify{M}$ are defined as follows:
\[
\Erase{M} = 
  \begin{cases}
  M' & \mbox{if $\sigma\tau(M) = q \derives M'$} \\
  \sigma\tau(M) & \mbox{else}
  \end{cases}
\qquad
\Trailify{M} = 
  \begin{cases}
  q & \mbox{if $\sigma\tau(M) = q \derives M'$} \\
  \qrefl & \mbox{else}
  \end{cases}
\]
\end{definition}
\begin{definition}
The focused form of a term $M$, denoted by $\focus{M}$, is defined by:
\[
\focus{M} = \Trailify{M} \derives \Erase{M}
\]
The focused form of a $\sigma$-normal substitution $s = \vect{N}\cdot\lift^p$ is defined by:
\[
\focus{s} = \vect{\focus{N}}\cdot\lift^p
\]
This definition is extended to all substitutions by taking $\focus{s} = \focus{\sigma(s)}$.
\end{definition}

\begin{lemma}\label{lem:focusjoin}
For all $M$ and $s$, we have $\sigma\tau(M) = \sigma\tau(\focus{M})$ and $\sigma\tau(s) = \sigma\tau(\focus{s})$.
\end{lemma}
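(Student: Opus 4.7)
The proof is essentially by unfolding the definitions, with a case split determined by the shape of $\sigma\tau(M)$; Theorem~\ref{thm:ST} (termination and confluence of $\sigma \cup \tau$) is the only nontrivial ingredient needed.

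For the term part, I would split on whether $\sigma\tau(M)$ has the form $q \derives M'$. In the affirmative case, by the definitions of $\Erase{\cdot}$ and $\Trailify{\cdot}$ we get $\focus{M} = q \derives M'$, which is literally $\sigma\tau(M)$; since this term is already a $\sigma\tau$-normal form (as the image of the normalization function), applying $\sigma\tau(\cdot)$ to it returns it unchanged, so $\sigma\tau(\focus{M}) = q \derives M' = \sigma\tau(M)$. In the negative case, $\focus{M} = \qrefl \derives \sigma\tau(M)$, and a single application of the rule $\qrefl \derives N \tred N$ followed by the fact that $\sigma\tau(M)$ is already normal gives $\sigma\tau(\focus{M}) = \sigma\tau(M)$.

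For the substitution part, I would reduce to the term case via the definition $\focus{s} = \focus{\sigma(s)}$. Writing $\sigma(s) = \vect{N}\cdot\lift^p$ for the $\sigma$-normal form, we have $\focus{s} = \vect{\focus{N_i}}\cdot\lift^p$, and since $\sigma \subseteq \sigma\tau$ one gets $\sigma\tau(s) = \sigma\tau(\sigma(s)) = \sigma\tau(\vect{N_i}\cdot\lift^p)$. By contextuality of $\sigma\tau$-reduction (applied inside each slot of the substitution) and confluence, two substitutions that differ only by replacing each $N_i$ with a term having the same $\sigma\tau$-normal form will themselves have the same $\sigma\tau$-normal form. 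Invoking the term part of the lemma for each $N_i$ then yields $\sigma\tau(\vect{N_i}\cdot\lift^p) = \sigma\tau(\vect{\focus{N_i}}\cdot\lift^p) = \sigma\tau(\focus{s})$.

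I do not expect any serious obstacle: the only thing to verify carefully is that terms of the form $q \derives M'$ produced by $\sigma\tau(\cdot)$ are genuinely normal (so in particular $q \neq \qrefl$ and $M'$ is not itself of the form $q'' \derives M''$, which would otherwise allow the rules $\qrefl \derives N \tred N$ and $q \derives (q' \derives N) \tred \qtrans(q,q') \derives N$ to fire), and that the congruence step used in the substitution case is justified by Theorem~\ref{thm:ST}. Both follow immediately from the normal-form property and confluence.
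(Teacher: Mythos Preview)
Your proposal is correct. The paper actually states this lemma without proof, treating it as immediate from the definitions; your argument spells out exactly the routine verification one would expect, using only the shape of $\sigma\tau(M)$, the rule $\qrefl \derives N \tred N$, and confluence of $\sigma\tau$ (Theorem~\ref{thm:ST}) for the congruence step in the substitution case.
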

\begin{lemma}\label{lem:projF2R}
For all $M$, $\erase{M} \strred \Erase{M}$ and $\trailify{M} \strred \Trailify{M}$.
\end{lemma}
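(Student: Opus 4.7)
The plan is to prove both claims uniformly by first pushing the explicit projections past a full $\sigma\tau$-normalization of their argument, and then showing that on a $\sigma\tau$-normal argument the explicit projection collapses to the meta-level one.

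First, unlike $\Bred$ (whose evaluation contexts $\cE_\sigma$ forbid holes under erasures), the $\sigma$- and $\tau$-rewriting relations are closed under arbitrary contextual congruence, so $\erase{M} \strred \erase{\sigma\tau(M)}$ and $\trailify{M} \strred \trailify{\sigma\tau(M)}$ by reducing inside. It therefore suffices to treat the case where $M$ is already $\sigma\tau$-normal. Next I would pin down the shape of $\sigma\tau$-normal forms: the $\tau$-rules push $\derives$-annotations outward through every non-bang constructor (as in $\lambda.(q \derives M) \tred \qlam(q) \derives \lambda.M$), so a $\sigma\tau$-normal $N$ is either $q \derives P$ where $P$ is $\sigma\tau$-normal with no root $\derives$, or itself has no root $\derives$. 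In both subcases $\derives$ occurs nowhere except absorbed inside bangs, and no $\qrefl \derives \cdot$ or other $\tau$-redex survives.

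The heart of the proof is a sublemma, shown by structural induction: for every $\sigma\tau$-normal $P$ without a root $\derives$, $\erase{P} \srred P$ and $\trailify{P} \strred \qrefl$. For $\erase{\cdot}$ each $\sigma$-rule propagates the erasure through the constructor (and stops at bangs via $\erase{\bang_q P'} \sred \bang_q P'$), then the induction hypothesis finishes. For $\trailify{\cdot}$ the $\sigma$-rule produces something like $\qapp(\trailify{P_1},\trailify{P_2})$, which by IH reduces to $\qapp(\qrefl,\qrefl)$ and then to $\qrefl$ via the $\tau$-rule $\qapp(\qrefl,\qrefl) \tred \qrefl$. The analogous reflexivity-collapse rules for $\qlam$, $\qlet$, and $\qtr$ cover the remaining non-bang congruences, while $\trailify{\bang_q P'} \sred \qrefl$, $\trailify{1} \sred \qrefl$, and $\trailify{1[\lift^n]} \sred \qrefl$ handle the leaves and bangs directly.

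With the sublemma in hand, one simply case-splits on $\sigma\tau(M)$. If $\sigma\tau(M) = q \derives P$, then $\erase{M} \strred \erase{q \derives P} \sred \erase{P} \srred P = \Erase{M}$, and $\trailify{M} \strred \trailify{q \derives P} \sred \qtrans(q,\trailify{P}) \strred \qtrans(q,\qrefl) \tred q = \Trailify{M}$, using $\qtrans(q,\qrefl) \tred q$. Otherwise $\sigma\tau(M)$ has no root $\derives$ and the sublemma applies directly to give $\erase{M} \strred \Erase{M}$ and $\trailify{M} \strred \qrefl = \Trailify{M}$.

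I expect the main subtlety to be the trailify sublemma: one must verify that every congruence trail constructor has a matching reflexivity-collapse $\tau$-rule (all of $\qlam(\qrefl)$, $\qapp(\qrefl,\qrefl)$, $\qlet(\qrefl,\qrefl)$, and $\qtr(\{\vect{\qrefl}\})$ are present in the $\tau$-rules of Section~\ref{CAUm}, so this checks out) and that the characterization of $\sigma\tau$-normal forms is precise enough for the structural induction to cover both base cases ($1$ and $1[\lift^n]$) uniformly. The erasure direction is essentially routine once the normal-form characterization is in place.
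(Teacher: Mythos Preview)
Your proposal is correct and follows essentially the same approach as the paper, which simply states that the proof is ``by induction on $\sigma\tau(M)$'' after unfolding the definitions. Your version spells out explicitly the normal-form characterization and the reflexivity-collapse sublemma that make that induction go through, but the underlying argument is the same.
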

\begin{proof}
After unfolding the definitions, the proof is by induction on $\sigma\tau(M)$.
\end{proof}

We also define a meta-level operation corresponding to explicit substitutions, and prove the correspondence:
\begin{definition}
For $\sigma$-normal $M$ and $\vect{N} = N_1,\ldots,N_k$, the meta-level substitution $M\ssubst{p}{\vect{N}}$ is defined by recursion on $\sigma$-normal terms as follows:
\begin{align*}
m\ssubst{p}{\vect{N}} & = N_m & n\ssubst{p}{\vect{N}} & = n+p \\
(\lambda.M)\ssubst{p}{\vect{N}} & = \lambda.(M\ssubst{p+1}{1,\vect{\tlift(N)}}) & (R~S)\ssubst{p}{\vect{N}} & = R\ssubst{p}{\vect{N}} ~ S\ssubst{p}{\vect{N}} \\
\tlet(R,S)\ssubst{p}{\vect{N}} & = \tlet\left(R\ssubst{p}{\vect{N}},\right.  &
(\bang_q M)\ssubst{p}{\vect{N}} & = \bang_q (M\ssubst{p}{\vect{N}}) \\
& \qquad \left.S\ssubst{p+1}{1,\vect{\tlift(N)}}\right) & 
\tinsp(\{\vect{M} \})\ssubst{p}{\vect{N}} & = \tinsp(\{ \vect{M\ssubst{p}{\vect{N}}} \}) \\
(q \derives M)\ssubst{p}{\vect{N}} & = q \derives
                                     (M\ssubst{p}{\vect{N}})
\end{align*}
where $m \leq k < p$ and $\vect{\tlift(N)}$ lifts by one all of the free indices in each $N_i$. \\
We will write $M\ssubst{}{\vect{N}}$ as syntactic sugar for $M\ssubst{0}{\vect{N}}$. 
\end{definition}
\begin{lemma}\label{lem:subF2R}
For all $M,\vect{N}$ in $\sigma\tau$-normal form, $M[\vect{N} \cdot \lift^p] \strred M \ssubst{p}{\vect{N}}$. In particular, $M[R] \strred M\ssubst{}{R}$.
\end{lemma}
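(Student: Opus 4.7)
The plan is to argue by structural induction on the $\sigma$-normal term $M$. Each case applies one of the $\sigma$-distribution rules to push the substitution inward and then appeals to the IH. The cases $M = R\,S$, $M = \bang_q M'$, $M = \tinsp(\{\vect{M}\})$ and $M = q \derives M'$ are essentially routine: each invokes a single distribution rule (e.g.\ $(R\,S)[s] \sred R[s]\,S[s]$, $(\bang_q M')[s] \sred \bang_q(M'[s])$, etc.) and the IH on the smaller components. The substantive work lies in the variable case and in the binder cases $\lambda.M'$ and $\tlet(R,S)$.

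For the variable case, a $\sigma$-normal index $m$ is represented as $1[\lift^{m-1}]$ (and as $1$ when $m=1$). I compute
\[
1[\lift^{m-1}][\vect{N}\cdot\lift^p] \sred 1\bigl[\lift^{m-1}\comp(\vect{N}\cdot\lift^p)\bigr]
\]
using the closure-composition rule, then repeatedly fire $\lift \comp (M \cdot s) \sred s$ together with associativity. If $m \leq k = |\vect{N}|$, stripping the first $m-1$ prefix elements reaches $1[N_m \cdot s']$, and one application of $1[M \cdot s] \sred M$ yields $N_m$, matching the first clause of the definition of $\cdot\ssubst{p}{\vect{N}}$. If $m > k$, I exhaust $\vect{N}$ entirely and reach $1[\lift^{m-1-k}\comp\lift^p] \sred 1[\lift^{m+p-1}]$, which is the $\sigma$-normal representation of $m+p$, matching the second clause.

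For the binder cases, I fire $(\lambda.M')[s] \sred \lambda.(M'[1\cdot(s\comp\lift)])$ (and analogously for $\tlet$ on its second argument). With $s = \vect{N}\cdot\lift^p$, repeated use of the distributivity rule $(M\cdot s)\comp t \sred M[t]\cdot(s\comp t)$ and associativity converts $(\vect{N}\cdot\lift^p)\comp\lift$ into the substitution $\vect{N[\lift]}\cdot\lift^{p+1}$. Each component $N_i[\lift]$ then $\strred$-reduces to $\tlift(N_i)$; this is a small auxiliary claim that $N[\lift] \strred \tlift(N)$ for $\sigma\tau$-normal $N$, proved by a subsidiary induction on $N$ mirroring the variable case above. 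The outer IH on $M'$ then produces exactly $M'\ssubst{p+1}{1,\vect{\tlift(N)}}$, which is the defining right-hand side of $(\lambda.M')\ssubst{p}{\vect{N}}$.

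The main obstacle is the arithmetic bookkeeping in the variable case: the two clauses of $\cdot\ssubst{p}{\vect{N}}$ (lookup versus shift by $p$) correspond to two distinct reduction patterns on $\lift^{m-1}\comp(\vect{N}\cdot\lift^p)$, and one must verify that the induction goes through cleanly when $m$ straddles $k$. The secondary technicality is the auxiliary lifting lemma $N[\lift]\strred \tlift(N)$, which is needed to justify the binder cases; once both are in hand, the remaining constructors follow mechanically from the corresponding distribution rule plus the IH.
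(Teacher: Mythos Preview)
Your proposal is correct and follows exactly the approach the paper indicates: the paper's proof reads in full ``Routine induction on $\sigma$-normal forms $M$,'' and your case analysis (variable, binder, and congruence cases, with the auxiliary lifting observation for the $N_i$ under binders) is a faithful elaboration of that one-liner.
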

\begin{proof}
Routine induction on $\sigma$-normal forms $M$.
\end{proof}

We can use meta-substitution to define $\beta$-reduction for $\sigma$-normal terms in the same style as in \CAUm, and lift it to eager $\bar{\beta}$-reduction acting on focused terms:
\begin{center}
\begin{tabular}{c}
$(\lambda.M)~N \bred \qba \derives M\ssubst{}{N} \qquad
\tlet(\bang_q M,N) \bred \qbb \derives N\ssubst{}{q \derives M}$ \\
\rVS
$\bang_q \cF[\tinsp(\vartheta)] \bred \bang_q \cF[\qti \derives q\vartheta]
\qquad$ \\
\rVS
\AxiomC{$M \bred N$}
\UnaryInfC{$\cE[M] \bred \cE[N]$}
\DisplayProof
$\qquad$
\AxiomC{$M \bred N$}
\UnaryInfC{$\cS[M] \bred \cS[N]$}
\DisplayProof
\\
\rVS
\AxiomC{$M \bred N$}
\UnaryInfC{$\focus{M} \bbred \focus{N}$}
\DisplayProof
$\qquad$
\AxiomC{$s \bred t$}
\UnaryInfC{$\focus{s} \bbred \focus{t}$}
\DisplayProof
\end{tabular}
\end{center}
The last two rules, defining $\bar{\beta}$-reduction in terms of $\beta$-reduction, are not sufficiently compositional for our proofs, but we can prove a version where the premise is also a $\bar{\beta}$-reduction:
\begin{lemma}\label{lem:bbredCtx}
If $M \bbred N$, then $\focus{\cE[M]} \bbred \focus{\cE[N]}$ and $\focus{\cS[M]} \bbred \focus{\cS[N]}$.
\end{lemma}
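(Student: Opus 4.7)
The plan is to reduce the claim to the case of $\beta$-reduction on $\sigma$-normal terms, where the context rule for $\bred$ applies directly by construction. Starting from the hypothesis $M \bbred N$, I would invert the defining rule of $\bbred$ to extract $\sigma$-normal witnesses $M_0, N_0$ such that $M_0 \bred N_0$, $M = \focus{M_0}$, and $N = \focus{N_0}$. Because the $\sigma$-normal context $\cE$ is built from $\sigma$-normal parameters and its hole is filled with the $\sigma$-normal term $M_0$ (resp.\ $N_0$), both $\cE[M_0]$ and $\cE[N_0]$ are themselves $\sigma$-normal, so the context rule for $\bred$ yields $\cE[M_0] \bred \cE[N_0]$. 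The defining rule of $\bbred$ then lifts this to $\focus{\cE[M_0]} \bbred \focus{\cE[N_0]}$.

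To conclude $\focus{\cE[M]} \bbred \focus{\cE[N]}$, I must justify the equalities $\focus{\cE[M]} = \focus{\cE[M_0]}$ and $\focus{\cE[N]} = \focus{\cE[N_0]}$. By Lemma \ref{lem:focusjoin}, $\sigma\tau(M) = \sigma\tau(\focus{M_0}) = \sigma\tau(M_0)$, so $M \steeq M_0$. Since $\sred \cup \tred$ is confluent and terminating (Theorem \ref{thm:ST}) and closed under contexts, the equivalence $\steeq$ is a congruence, hence $\cE[M] \steeq \cE[M_0]$, and both sides share a $\sigma\tau$-normal form. Because $\focus{\cdot}$ is defined purely as a function of the $\sigma\tau$-normal form, the desired equality follows; the argument for $N$ is symmetric.

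The case of substitution contexts $\cS$ proceeds in exactly the same way, invoking the context rule of $\bred$ for substitutions and using the extension $\focus{s} = \focus{\sigma(s)}$, which again depends only on the $\sigma\tau$-normal form of $s$ and so respects $\steeq$.

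The main obstacle I anticipate is making the congruence step fully rigorous: although $\sred$ and $\tred$ are defined compositionally and $\steeq$ is an equivalence, showing that $\steeq$ is a congruence with respect to the specific class of $\sigma$-normal contexts $\cE$ (and $\cS$) requires a careful reliance on Theorem \ref{thm:ST} together with the fact that the contextual closure rules preserve $\sigma$-normality of the surrounding frame. Once this congruence principle is in place, every remaining step is a direct unfolding of definitions.
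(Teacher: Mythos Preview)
Your approach is correct and essentially matches the paper's: both invert the defining rule of $\bbred$ to obtain $\sigma$-normal $M_0 \bred N_0$, use the context closure of $\bred$ (the paper phrases this as an induction that ``merges'' the outer $\cE$ with any inner context $\cE'$, which amounts to the same thing), and then apply the defining rule of $\bbred$ again. You are more explicit than the paper about the auxiliary step $\focus{\cE[M]} = \focus{\cE[M_0]}$ via the $\steeq$-congruence, which the paper's sketch leaves implicit.
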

\begin{proof}
From the hypothesis $M \bbred N$ we obtain $M',N'$ such that $M = \focus{M'}$, $N = \focus{N'}$ and $M' \bred N'$. We proceed by induction on this reduction: in the congruence case we obtain a context $\cE'$ that we need to merge with $\cE$ to obtain the thesis. The substitution case follows similarly.
\end{proof}

\noindent Then we prove that $\beta$-reduction can be simulated in \CAUms:
\begin{lemma}\label{lem:bred2BredStred}
If $M \bred N$, then $M \Bred \strred N$
\end{lemma}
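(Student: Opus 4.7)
The plan is to proceed by induction on the derivation of $M \bred N$, splitting into three base cases (one per principal contraction) and two inductive cases for the term- and substitution-context congruences.

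For each principal case, I fire the corresponding explicit Beta rule of \CAUms, producing a reduct decorated with explicit trail projections $\erase{\cdot}, \trailify{\cdot}$ and an explicit substitution. I then apply Lemma~\ref{lem:projF2R} to $\strred$-reduce the explicit projections to their meta-level counterparts $\Erase{\cdot}, \Trailify{\cdot}$, and Lemma~\ref{lem:subF2R} to $\strred$-reduce the explicit substitution to the meta-level substitution $M_0\ssubst{}{N_0}$. Further $\tau$-rules distributing transitivity over congruences are then invoked to reorganize the resulting trail into the form $\qba \derives M_0\ssubst{}{N_0}$ required by the meta-level $\bred$. The same recipe handles the $\qbb$ case, while the inspection case additionally requires showing that $q' = \sigma\tau(\qtrans(q,\trailify{\cF[\tinsp(\vartheta)]}))$, produced by the \CAUms rule, reduces to the $q$ of the meta-level rule once the projections in $\trailify{\cF[\tinsp(\vartheta)]}$ have been evaluated.

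The inductive cases are mechanical. If $\cE[M'] \bred \cE[N']$ is derived from $M' \bred N'$, the induction hypothesis supplies an intermediate $R$ with $M' \Bred R$ and $R \strred N'$; since $\sigma$-normal contexts $\cE$ and $\cS$ are instances of the evaluation contexts $\cE_\sigma, \cS_\sigma$ of Definition~\ref{def:evalctx}, the congruence closures of $\Bred$ and $\strred$ immediately yield $\cE[M'] \Bred \cE[R] \strred \cE[N']$, and analogously for the substitution-context case.

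The main obstacle lies in the trail reconciliation inside the base cases: explicit Beta gathers the internal trails of the subterms \emph{before} the $\qba$ step (the ``catachronistic'' form $\qtrans(\qapp(\qlam(\Trailify{M_0}),\Trailify{N_0}),\qba)$), whereas meta-level $\bred$ places $\qba$ first and threads the internal trails through meta-substitution. Bridging the two trails cleanly, via the $\tau$-rules that distribute transitivity over $\qapp$, $\qlam$, and $\qlet$ (together with the identities $\qtrans(\qrefl,q) \tred q$ and $\qtrans(q,\qrefl) \tred q$ to collapse the reflexivity components produced by $\trailify{\cdot}$), is where I expect the bulk of the technical effort to lie.
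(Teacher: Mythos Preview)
Your induction and your use of Lemma~\ref{lem:subF2R} match the paper exactly; its entire proof is ``by induction on the hypothesis, using Lemma~\ref{lem:subF2R}.'' What you misjudge is the weight of the base cases. The $\bred$ of the appendix is the nameless rendering of the \CAUm contractions, which are only ever fired on $\tau$-normal terms; hence in any redex $(\lambda.M_0)~N_0$ the subterms $M_0$ and $N_0$ carry no $q\derives{-}$ annotation outside bangs, so directly $\erase{M_0}\strred M_0$, $\erase{N_0}\strred N_0$, and $\trailify{M_0},\trailify{N_0}\strred\qrefl$. Your ``main obstacle'' therefore collapses to rewriting $\qtrans(\qapp(\qlam(\qrefl),\qrefl),\qba)$ to $\qba$ using only the congruence-to-$\qrefl$ rules and the transitivity units; the transitivity-distribution rules over $\qapp,\qlam,\qlet$ that you plan to invoke are neither applicable nor needed, and Lemma~\ref{lem:projF2R} is superfluous. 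The same remark handles the $\qbb$ and inspection cases (for the latter, trail-freeness of $\cF[\tinsp(\vartheta)]$ immediately gives $q'=q$).

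It is worth stressing that this simplification is not optional: were $\Trailify{M_0}$ or $\Trailify{N_0}$ ever non-trivial, the catachronistic trail $\qtrans(\qapp(\qlam(\Trailify{M_0}),\Trailify{N_0}),\qba)$ and the anachronistic trail produced by $\bred$ would \emph{not} be joinable by $\strred$ at all---this is precisely the phenomenon of Figure~\ref{fig:nonjoinable}. So the bridging you anticipate as ``the bulk of the technical effort'' would be impossible, not merely laborious; the key observation you are missing is that on the relevant ($\tau$-normal) inputs it simply never arises.
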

\begin{proof}
By induction on the hypothesis, using Lemma~\ref{lem:subF2R}.
\end{proof}

\subsubsection*{Proof of Theorem~\ref{thm:simuR}.} 
\emph{If $M \CAUmrred N$, then $M \CAUmsrred N$.}

\noindent $M \CAUmred N$ implies $M \bred R \trred \tau(R) = N$ for some $R$; then $M \CAUmsrred N$ is an immediate consequence of Lemma~\ref{lem:bred2BredStred}.

\medskip

We are also interested in proving the dual statement: for every \CAUms
$\mbox{Beta}$-reduction there should be a \CAUm reduction on the corresponding $\sigma\tau$-normal forms. To prove this theorem, we need a suitable version of the usual substitutivity property of $\beta$-reduction, which is proved in the standard way:
\begin{lemma}\label{lem:CAUsubstlemma}
\mbox{}\\
If $M \bred M'$, then $M \ssubst{k}{\vect{N}} \bred M' \ssubst{k}{\vect{N}}$.\\
If $N_i \bred N'_i$, then $M \ssubst{k}{N_1\cdots N_{i-1},N_i,N_{i+1} \cdots N_p} \brred \\
M\ssubst{k}{N_1\cdots N_{i-1},N'_i,N_{i+1} \cdots N_p}$.
\end{lemma}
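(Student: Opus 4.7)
The plan is to treat the two statements as the standard de~Bruijn substitution lemma, generalised to simultaneous substitutions and enriched with the trail-carrying constructs of \CAUm. Since both the reduction $\bred$ and the meta-substitution $\ssubst{k}{\vec{N}}$ are defined only on $\sigma$-normal terms, neither explicit substitutions nor $\trailify{-}$ can occur in the terms or trails at hand, so the usual arithmetic of de~Bruijn indices applies without perturbation by the \CAUms-specific syntax.

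For Part 1, I would proceed by induction on the derivation of $M \bred M'$. The principal contractions are the substantive base cases. For the ordinary beta case $(\lambda.R)~S \bred \qba \derives R\ssubst{}{S}$, reducing after substitution yields $((\lambda.R)\ssubst{k}{\vec{N}})~(S\ssubst{k}{\vec{N}}) = (\lambda.R\ssubst{k+1}{1,\vec{\tlift(N)}})~(S\ssubst{k}{\vec{N}}) \bred \qba \derives R\ssubst{k+1}{1,\vec{\tlift(N)}}\ssubst{}{S\ssubst{k}{\vec{N}}}$, and one must match this with $(\qba \derives R\ssubst{}{S})\ssubst{k}{\vec{N}} = \qba \derives R\ssubst{}{S}\ssubst{k}{\vec{N}}$. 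The key technical ingredient, which I would prove separately by induction on $R$, is the substitution composition identity
\[
R\ssubst{}{S}\ssubst{k}{\vec{N}} \;=\; R\ssubst{k+1}{1,\vec{\tlift(N)}}\ssubst{}{S\ssubst{k}{\vec{N}}},
\]
together with its companion $\tlift(-)$-commutation lemmas. The $\beta_!$ case is handled analogously. For the trail-inspection case I would additionally need the commutation $(q\vartheta)\ssubst{k}{\vec{N}} = q\,(\vartheta\ssubst{k}{\vec{N}})$, where substitution on a trail replacement acts pointwise on its branches; this follows by structural induction on $q$, since $\ssubst{k}{\vec{N}}$ distributes over the term constructors appearing in the recursion clauses defining $q\vartheta$. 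The congruence cases are routine, with the only care required in the $\lambda.\cF$, $\tlet$, and $\tinsp$ clauses, where the substitution level is shifted from $k$ to $k+1$ and the substituends lifted accordingly.

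For Part 2, I would induct on the structure of $M$ (in $\sigma$-normal form). The variable case is where the actual $\bred$ steps originate: if $M = i{+}k$ (the de~Bruijn index pointing to the $i$-th entry of $\vec{N}$), the two sides of the statement reduce to $N_i$ and $N'_i$, giving exactly one $\bred$ step; if $M$ is any other index, both sides are identical and zero steps suffice. All other cases are compositional: for $\lambda.R$ and $\tlet(R,S)$ one applies the induction hypothesis under the binder, after lifting $N_i \bred N'_i$ to $\tlift(N_i) \bred \tlift(N'_i)$ (which is an easy auxiliary lemma since lifting is a renaming); for applications, bangs, trail annotations, and inspections the induction hypotheses on the immediate subterms are combined by the congruence closure of $\bred$. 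Because $M$ may contain several occurrences of index $i{+}k$, each occurrence contributes one $\bred$ step, yielding a multi-step reduction $\brred$ overall, which is why the conclusion of Part 2 is stated with $\brred$ rather than $\bred$.

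The main obstacle is bookkeeping in the substitution composition lemma: getting the indexing $k+1$ versus $k$ right, and ensuring that the $\vec{\tlift(N)}$ on one side is exactly balanced by the lifting built into $\ssubst{k+1}{1,-}$ on the other. Once that identity is in place, both parts are essentially mechanical. A secondary subtlety is that $q$ in the annotated subterms $q \derives R$ is unaffected by substitution (trails in $\sigma$-normal form contain no term-level occurrences), so the trail-producing contractions commute with $\ssubst{k}{\vec{N}}$ without further bookkeeping.
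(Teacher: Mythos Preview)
Your proposal is correct and follows exactly the standard approach the paper alludes to; the paper itself gives no proof beyond remarking that the lemma ``is proved in the standard way.'' One small slip: in Part~2 the variable case should read ``if $M$ is the index $i$'' rather than ``$i+k$'', since by the paper's definition $m\ssubst{p}{\vec N} = N_m$ for $m$ in range---the shift parameter only affects out-of-range indices---but this does not affect the structure of your argument.
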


We can now prove that $\mbox{Beta}$-reductions can be mimicked by $\bar{\beta}$-reductions on the corresponding focused forms:
\begin{lemma}\label{lem:betaF2R}
If $M \Bred N$ then $\focus{M} \bbrred \focus{N}$.\\
If $s \Bred t$ then $\focus{s} \bbrred \focus{t}$.
\end{lemma}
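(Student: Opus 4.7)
I would prove the two statements simultaneously by induction on the derivations of $M \Bred N$ and $s \Bred t$. The three principal contractions supply the base cases, while the $\cE_\sigma$ and $\cS_\sigma$ congruence rules give the inductive ones; the latter are discharged by combining the induction hypothesis with Lemma~\ref{lem:bbredCtx}, iterated so as to lift a chain of single-step $\bbred$'s into a $\bbrred$.

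\textbf{Principal contractions.} For the application redex $(\lambda.M)~N \Bred \qtrans(\qapp(\qlam(\trailify{M}),\trailify{N}),\qba) \derives \erase{M}[\erase{N}]$ I would compute the focused form of both sides. Lemma~\ref{lem:projF2R} rewrites $\trailify{M},\trailify{N}$ to $\Trailify{M},\Trailify{N}$ and $\erase{M},\erase{N}$ to $\Erase{M},\Erase{N}$; Lemma~\ref{lem:subF2R} rewrites the explicit substitution $\Erase{M}[\Erase{N}]$ to the meta-substitution $\Erase{M}\ssubst{}{\Erase{N}}$; after the $\tau$-rules for $\qlam$ and $\qapp$ merge the surfacing trails, the focus of the right-hand side becomes $\qtrans(\qapp(\qlam(\Trailify{M}),\Trailify{N}),\qba) \derives \Erase{M}\ssubst{}{\Erase{N}}$. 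On the left, $\sigma\tau$-normalisation permutes the internal trails of $M$ and $N$ outward through $\lambda$ and application, giving the focused form $\qapp(\qlam(\Trailify{M}),\Trailify{N}) \derives (\lambda.\Erase{M})~\Erase{N}$. A single $\bred$-step on this underlying application redex, lifted through the enclosing context $\qapp(\qlam(\Trailify{M}),\Trailify{N}) \derives \Hole$ and then refocused via the $\tau$-rule $q \derives (q' \derives R) \tred \qtrans(q,q') \derives R$, closes the gap in one $\bbred$-step. The let-bang contraction proceeds analogously. The trail-inspection case reduces to the underlying $\bred$-rule on the $\sigma\tau$-normalised bang: the $q' = \sigma\tau(\qtrans(q,\trailify{\cF[\tinsp(\vartheta)]}))$ computed by the Beta-rule is precisely the trail that focusing deposits on the enclosing bang.

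\textbf{Congruence cases.} For a reduction $\cE_\sigma[M] \Bred \cE_\sigma[N]$ arising from $M \Bred N$, the induction hypothesis gives $\focus{M} \bbrred \focus{N}$. I would show that $\focus{\cE_\sigma[M]}$ coincides, up to $\sigma\tau$-equivalence, with $\cE[\focus{M}]$ for some $\sigma$-normal evaluation context $\cE$ obtained by $\sigma\tau$-normalising $\cE_\sigma$, and symmetrically for the reduct; then iterating Lemma~\ref{lem:bbredCtx} along the $\bbrred$-chain yields $\focus{\cE_\sigma[M]} \bbrred \focus{\cE_\sigma[N]}$. The $\cS_\sigma$ congruence and the substitution statement are handled identically through the substitution clause of the same lemma, which is precisely why the induction must be mutual.

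\textbf{Main obstacle.} The delicate point is that $\cE_\sigma$ is not itself $\sigma$-normal, so the equation $\focus{\cE_\sigma[M]} = \cE[\focus{M}]$ holds only after further $\sigma\tau$-rewriting; keeping that bookkeeping straight, and verifying that the trail reorganisations on the right-hand side of each principal contraction really align with those produced by focusing on the left (in particular the $\qtrans$-associativity and $\qtrans(\qapp,\qapp)$ rules), is where the bulk of the work lies. I expect the trail-inspection case to be the most finicky, because the Beta-rule's meta-operation $q'\vartheta$ is computed before the branches in $\vartheta$ have been normalised, so I will need to check that applying $q'$ to $\vartheta$ and to its $\sigma\tau$-normalised variant produces $\sigma\tau$-equivalent results once re-enclosed under the outer bang.
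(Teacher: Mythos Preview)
Your overall plan---mutual induction on the Beta derivation, computing the focused forms of the principal contracta via Lemmas~\ref{lem:projF2R} and~\ref{lem:subF2R}, and lifting the induction hypothesis through contexts with Lemma~\ref{lem:bbredCtx}---matches the paper's structure, and your treatment of the lambda and let-bang base cases is essentially what the paper does.

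The gap is in the congruence step when the context crosses a closure. When $\cE_\sigma$ has the shape $\cE'_\sigma[s]$ (or dually $M[\cS_\sigma]$), your claim that ``$\focus{\cE_\sigma[M]}$ coincides, up to $\sigma\tau$-equivalence, with $\cE[\focus{M}]$ for some $\sigma$-normal evaluation context $\cE$ obtained by $\sigma\tau$-normalising $\cE_\sigma$'' is false: the grammar of $\sigma$-normal contexts $\cE$ contains no closure former, and $\sigma$-normalising $\Hole[s]$ pushes the substitution \emph{into} the hole. The result is a meta-level substitution applied to (the normal form of) $M$, whose shape depends on the free indices of $M$ itself; there is no single $\cE$ that works uniformly for $M$ and for the reduct $M'$, so Lemma~\ref{lem:bbredCtx} cannot be invoked. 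The paper singles out precisely these two cases,
\[
\focus{M} \bbrred \focus{M'} \Longrightarrow \focus{M[s]} \bbrred \focus{M'[s]}
\qquad
\focus{s} \bbrred \focus{s'} \Longrightarrow \focus{M[s]} \bbrred \focus{M[s']},
\]
and discharges them with the substitutivity lemma for $\bred$ (Lemma~\ref{lem:CAUsubstlemma}); that lemma is the ingredient missing from your plan, and without it the induction does not go through for the closure congruences.
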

\begin{proof}
By mutual induction on the derivations of $M \Bred N$ and $s \Bred t$. We consider two base cases, the applied explicit substitution case, and one of the other inductive cases (reduction in the first subterm of an application); the remaining cases can be proved similarly.

The first base case consists of lambda-application redexes:
\[
(\lambda M)~N \Bred \qtrans(\qapp(\qlam(\trailify{M}),\trailify{N}),\qba) \derives \erase{M}[\erase{N}]
\]
We need to prove that:
\[
\focus{(\lambda M)~N} \bbred \focus{\qtrans(\qapp(\qlam(\trailify{M}),\trailify{N}),\qba)) \derives \erase{M}[\erase{N}]}
\]
By confluence of $\sigma\tau$, we rewrite the left-hand side of the thesis with a $\sigma\tau$-equivalent focused term:
\[
\focus{(\lambda M)~N)} = \focus{\qapp(\qlam(\Trailify{M}),\Trailify{N}) \derives
  (\lambda \Erase{M})~\Erase{N}}
\]
Now we perform a $\bbred$ step to obtain:
\begin{align*}
& \focus{\qapp(\qlam(\Trailify{M}),\Trailify{N}) \derives \qba \derives \Erase{M}\ssubst{}{\Erase{N}}} \\
= & \focus{\qtrans(\qapp(\qlam(\Trailify{M}),\Trailify{N}),\qba) \derives \Erase{M}\ssubst{}{\Erase{N}}}
\end{align*}
We prove that the rhs of the thesis equals the result of the $\bar{\beta}$-reduction:
\begin{align*}
& \focus{\qtrans(\qapp(\qlam(\trailify{M}),\trailify{N}),\qba) \derives \erase{M}[\erase{N}]} \\
= & \focus{\qtrans(\qapp(\qlam(\Trailify{M}),\Trailify{N}),\qba) \derives \Erase{M}[\Erase{N}]} \\
= & \focus{\qtrans(\qapp(\qlam(\Trailify{M}),\Trailify{N}),\qba) \derives \Erase{M}\ssubst{}{\Erase{N}}}
\end{align*}

\noindent The case of let-box redexes is similar. In \CAUms, we have:
\[
\tlet(\bang_q M,N) \Bred \qtrans(\qlet(\qrefl,\trailify{N}),\qbb) \derives \erase{N}[q \derives M]
\]
We need to prove that:
\[
\focus{\tlet(\bang_q M,N)} \bred \focus{\qtrans(\qlet(\qrefl,\trailify{N}),\qbb) \derives \erase{N}[q \derives M]}
\]
We rewrite the lhs of the thesis:
\[
\focus{\tlet(\bang_q M,N)} = \focus{\qlet(\qrefl,\Trailify{N}) \derives \tlet(\bang_q M, \Erase{N})}
\]
then perform a $\bar{\beta}$-step to obtain:
\begin{align*}
& \focus{\qlet(\qrefl,\Trailify{N}) \derives \qbb \derives \Erase{N}\ssubst{}{q \derives M}} \\
= & \focus{\qtrans(\qlet(\qrefl,\Trailify{N}), \qbb) \derives \Erase{N}\ssubst{}{q \derives M}} \\
= & \focus{\qtrans(\qtrans(\qlet(\qrefl,\Trailify{N}),\qbb), \Trailify{\Erase{N}\ssubst{}{q \derives M}}) \derives \Erase{\Erase{N}\ssubst{}{q \derives M}}}
\end{align*}
We prove that the rhs of the thesis equals the result of the $\bar{\beta}$-reduction:
\begin{align*}
& \focus{\qtrans(\qlet(\qrefl,\trailify{N}),\qbb) \derives \erase{N}[q \derives M]} \\
= & \focus{\qtrans(\qlet(\qrefl,\Trailify{N}),\qbb) \derives \Erase{N}[q \derives M])} \\
= & \focus{\qtrans(\qlet(\qrefl,\Trailify{N}),\qbb) \derives \Erase{N}\ssubst{}{q \derives M}} \\
= & \focus{\qtrans(\qtrans(\qlet(\qrefl,\Trailify{N}),\qbb),\Trailify{\Erase{N}\ssubst{}{q \derives M}}) \derives \Erase{\Erase{N}\ssubst{}{q \derives M}}}
\end{align*}

In the application case, we need to prove $\focus{M~N} \bbred \focus{M'~N}$ under the induction hypothesis that $\focus{M} \bbred \focus{M'}$. By Lemma~\ref{lem:bbredCtx} we prove
\[
\focus{\focus{M}~N} \bbred \focus{\focus{M'}~N}
\]
which equals the thesis by substitution for $\sigma\tau$-equivalent subterms.

For applied explicit substitution, we have the following cases:
\begin{gather*}
\focus{M} \bbrred \focus{M'} \Longrightarrow \focus{M[s]} \bbrred \focus{M'[s]} \\
\focus{s} \bbrred \focus{s'} \Longrightarrow \focus{M[s]} \bbrred \focus{M[s']} 
\end{gather*}
They are both consequences of Lemma~\ref{lem:CAUsubstlemma}.
\end{proof}

In the following results, we write $\CAUmred$ for the reduction relation in $\CAUm$, i.e. a $\bred$-step followed by $\tau$-normalization, and $\CAUmsred$ for the full rewriting system of $\CAUms$.
\begin{lemma}\label{lem:bbred2bred}
If $M \bbred N$ then $\sigma\tau(M) \CAUmred \sigma\tau(N)$.
\end{lemma}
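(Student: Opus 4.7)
The plan is to unfold the definition of $\bbred$ and reduce the claim to one about the underlying $\bred$-step, then induct on that derivation. From $M \bbred N$ we obtain $M', N'$ with $M = \focus{M'}$, $N = \focus{N'}$ and $M' \bred N'$. By Lemma~\ref{lem:focusjoin} we have $\sigma\tau(M) = \sigma\tau(M')$ and $\sigma\tau(N) = \sigma\tau(N')$, so it suffices to prove $\sigma\tau(M') \CAUmred \sigma\tau(N')$. This shifts the burden from the focused relation $\bbred$ to the more tractable meta-level $\bred$, and ensures the hypothesis of the inductive argument is structural.

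I would then proceed by induction on the derivation $M' \bred N'$, following the same case split that underlies Lemma~\ref{lem:bbredCtx}. In the principal contraction case for an applied lambda $(\lambda.P)\,Q \bred \qba \derives P\ssubst{}{Q}$, I compute $\sigma\tau$ of both sides: $\tau$-normalization pushes the trails internal to $P$ and $Q$ outside, producing a term of the shape $q' \derives (\lambda.\Erase{P})\,\Erase{Q}$ matching $\focus{(\lambda.P)\,Q}$, so the unique redex of the $\CAUm$-term is now at a position where its contraction genuinely corresponds to the $\bred$-step; Lemma~\ref{lem:subF2R} then lets me identify the result of meta-substitution with the $\CAUm$ substitutum, and a final $\tau$-normalization delivers $\sigma\tau(N')$. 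The $\tlet$-contraction is analogous. The trail-inspection case $\bang_q \cF[\tinsp(\vartheta)] \bred \bang_q \cF[\qti \derives q\vartheta]$ is essentially immediate, since the redex shape is shared between $\CAUm$ and \CAUms and the reduct already has matching trail structure.

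For the inductive (congruence) cases, if $M' = \cE[P]$ and $N' = \cE[Q]$ with $P \bred Q$, I apply the induction hypothesis, obtaining $\sigma\tau(P) \CAUmred \sigma\tau(Q)$, and then use the contextual closure of $\CAUmred$ together with the fact that $\sigma\tau(\cE[P])$ factors as a $\sigma\tau$-normal $\CAUm$-context $\cE^\ast$ plugged with $\sigma\tau(P)$ (and similarly for $Q$), so that $\sigma\tau(\cE[P]) \equiv \cE^\ast[\sigma\tau(P)] \CAUmred \cE^\ast[\sigma\tau(Q)] \equiv \sigma\tau(\cE[Q])$. The substitution-context case $M' = \cS[P]$ reduces in the same way after first observing that $\sigma\tau$-normalizing through a substitution can itself be viewed as a series of reductions on the host term.

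The hard part will be the base case for applied lambdas: one must carefully track how $\tau$-permutations rearrange the trails surrounding the redex so that the contractum produced by the meta-level substitution $P\ssubst{}{Q}$ really does coincide, after $\tau$-normalization, with what $\CAUm$ produces from the focused redex. This is where the anachronism discussed in Section~\ref{naive} could bite; the argument works because $\focus{\cdot}$ forces the trails internal to $P$ and $Q$ out of the redex before the contraction is simulated in $\CAUm$, so the $\qba$ step is performed on a term whose surrounding trail context already accounts for all prior history, and no trail has to migrate across the contraction.
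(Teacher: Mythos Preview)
Your unfolding step is right and matches the paper: from $M \bbred N$ one obtains $M', N'$ with $M = \focus{M'}$, $N = \focus{N'}$ and $M' \bred N'$. But you then take an unnecessary detour. The paper's argument is a direct observation with no induction: the relation $\bred$ defined in the appendix is the nameless rendering of \CAUm's $\beta$-reduction, which by convention (Section~\ref{CAUm}) operates on $\tau$-normal sources. Hence $M'$ is already $\sigma\tau$-normal, so $\sigma\tau(\focus{M'}) = \sigma\tau(M') = M'$; on the target side $N'$ is a \CAUm term and thus $\sigma$-normal, so $\sigma\tau(\focus{N'}) = \tau(N')$. The desired $M' \CAUmred \tau(N')$ is then literally the definition of $\CAUmred$.

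Your inductive plan is not merely longer but also fragile in exactly the place you flag as ``the hard part''. In your base case you write that ``$\tau$-normalization pushes the trails internal to $P$ and $Q$ outside'', presupposing that the redex $(\lambda.P)\,Q$ may contain local trails. If that were genuinely allowed, the reduced goal $\sigma\tau(M') \CAUmred \sigma\tau(N')$ would be \emph{false}: Figure~\ref{fig:nonjoinable} exhibits precisely such an $M'$ (a $\beta$-redex with a trail inside the argument) whose $\bred$-reduct, after $\tau$-normalization, differs from the $\CAUmred$-reduct of $\tau(M')$. Your appeal to focusing (``$\focus{\cdot}$ forces the trails \ldots\ out of the redex before the contraction'') does not rescue this, because once you invoke Lemma~\ref{lem:focusjoin} you have dropped the focused presentation and are reasoning about the raw $M', N'$; the focusing acted on $M'$ as a whole, not on the subterms $P,Q$ of the redex. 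What actually avoids the anachronism is the $\sigma\tau$-normality of $M'$; once you use that, the induction and the delicate trail-tracking you describe become vacuous, and the proof reduces to the two-line unfolding the paper gives.
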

\begin{proof}
By the definition of $\bar{\beta}$-reduction, we obtain $M = \focus{M'}, N = \focus{N'}$ such that $M' \bred N'$, then we prove $\sigma\tau(\focus{M'}) = M'$ and $\sigma\tau(\focus{N'}) = \tau(N')$.
\end{proof}
\begin{lemma}\label{lem:simuBred}
If $M \Bred N$, then $\sigma\tau(M) \CAUmrred \sigma\tau(N)$
\end{lemma}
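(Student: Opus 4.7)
The plan is to chain together the three lemmas immediately preceding the statement. Given $M \Bred N$, I would first apply Lemma~\ref{lem:betaF2R} to obtain $\focus{M} \bbrred \focus{N}$, which expresses the reduction as a (possibly multi-step) $\bar\beta$-reduction between focused forms.

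Next, I would decompose this $\bbrred$-derivation into its constituent single steps $\focus{M} = P_0 \bbred P_1 \bbred \cdots \bbred P_k = \focus{N}$ and apply Lemma~\ref{lem:bbred2bred} to each step, producing a chain
\[
\sigma\tau(P_0) \CAUmred \sigma\tau(P_1) \CAUmred \cdots \CAUmred \sigma\tau(P_k),
\]
and hence $\sigma\tau(\focus{M}) \CAUmrred \sigma\tau(\focus{N})$ by transitivity of $\CAUmrred$.

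Finally, I would invoke Lemma~\ref{lem:focusjoin} to rewrite $\sigma\tau(\focus{M}) = \sigma\tau(M)$ and $\sigma\tau(\focus{N}) = \sigma\tau(N)$, yielding the desired conclusion $\sigma\tau(M) \CAUmrred \sigma\tau(N)$.

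I do not expect any real obstacle here: the argument is a straightforward three-step composition, and all of the substantive work — the careful handling of focused forms, trail projections, and the anachronism issue — has already been absorbed into the proofs of Lemmas~\ref{lem:betaF2R},~\ref{lem:bbred2bred}, and~\ref{lem:focusjoin}. The only point worth noting is that the decomposition step is sound because $\CAUmrred$ is by definition the reflexive-transitive closure of $\CAUmred$, so a single $\Bred$ step in \CAUms can correspond to several $\CAUmred$ steps in \CAUm (once a lazy reduction materialises multiple pending trail and substitution operations in the focused form).
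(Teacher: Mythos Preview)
Your proposal is correct and matches the paper's own proof essentially verbatim: the paper also invokes Lemma~\ref{lem:betaF2R}, then Lemma~\ref{lem:bbred2bred}, then Lemma~\ref{lem:focusjoin}, in that order. Your explicit decomposition of $\bbrred$ into single $\bbred$ steps is just a spelled-out version of what the paper leaves implicit.
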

\begin{proof}
By Lemma~\ref{lem:betaF2R} we get $\focus{M} \bbrred \focus{N}$. By Lemma~\ref{lem:bbred2bred} and Lemma~\ref{lem:focusjoin}, we prove the thesis.
\end{proof}
\begin{lemma}\label{lem:simuStred}
If $M \steeq N$ then $\sigma\tau(M) = \sigma\tau(N)$
\end{lemma}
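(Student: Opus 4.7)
The plan is to reduce the statement to a standard fact about abstract rewriting systems: in a confluent system, any two terms that are equivalent under the symmetric-reflexive-transitive closure of the reduction relation have the same normal form (when one exists). By Theorem~\ref{thm:ST}, the combined rewriting relation $\stred$ is terminating and confluent, so every term $M$ admits a unique normal form, which is precisely the $\sigma\tau(M)$ used throughout the paper.

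Concretely, I would proceed by induction on the generation of $\steeq$ as the reflexive, symmetric, transitive closure of $\stred$. The reflexive case is trivial. For the base case $M \stred N$, since $\sigma\tau(N)$ is a normal form reachable from $M$, uniqueness of normal forms forces $\sigma\tau(M) = \sigma\tau(N)$. The symmetric and transitive cases follow immediately from the inductive hypotheses, since equality of normal forms is itself symmetric and transitive.

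An alternative, slightly slicker route is to invoke the Church--Rosser property that follows from confluence of $\stred$: $M \steeq N$ is equivalent to the existence of a common reduct $P$ with $M \strred P$ and $N \strred P$. Then $\sigma\tau(M)$, $\sigma\tau(N)$, and $\sigma\tau(P)$ must all coincide by uniqueness of normal forms (using termination from Theorem~\ref{thm:ST}).

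There is no real obstacle here: the statement is a direct consequence of Theorem~\ref{thm:ST} together with standard abstract rewriting theory. The only thing to double-check is that the $\sigma\tau$-normalization operator is well-defined on all \CAUms\ terms, which is exactly what termination plus confluence of $\sred \cup \tred$ guarantees.
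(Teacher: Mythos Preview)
Your proposal is correct. The paper does not actually spell out a proof of this lemma, treating it as immediate from Theorem~\ref{thm:ST}; your argument via confluence and termination of $\stred$ (hence uniqueness of normal forms, hence coincidence of normal forms across the equivalence closure) is exactly the intended justification.
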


We can finally give the proof of the main theorem.
\subsubsection*{Proof of Theorem~\ref{thm:simu}.}
\emph{If $M \CAUmsrred N$, then $\sigma\tau(M) \CAUmrred \sigma\tau(N)$}.

\noindent We rewrite the hypothesis as
\[
M \Brred \steeq \Brred \steeq \cdots N
\]
Then the proof is a diagram chase based on Lemma~\ref{lem:simuBred} and Lemma~\ref{lem:simuStred}.

\section{Proofs about the abstract machine}
This section is devoted to the correctness proof for the abstract machine presented in Section~\ref{SECD}.
Before proceeding to the main theorem, we need some auxiliary definitions and lemmas.

\begin{definition}
A machine state $(\pi,D)$ is valid if $\jtm{(\pi,D)}$ and:
\begin{itemize}
\item for all $(q|\pure{M}|e)$ in $\pi$, $\pure{M}[e]$ is closed and all values in $e$ are also closed
\item all values in $D$ are closed
\end{itemize}
\end{definition}

\begin{lemma}\label{lem:valid}
If $\varsigma$ is valid and $\varsigma \mapsto \varsigma'$, then $\varsigma'$ is also valid.
\end{lemma}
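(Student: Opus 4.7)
The plan is to proceed by case analysis on the ten transition rules of Figure~\ref{fig:machine}, verifying for each case that both parts of validity are preserved: (a) the well-formedness judgment $\jtm{\varsigma'}$ still holds, and (b) the closure side-conditions on the environments in the stack and on the values in the dump continue to hold. The overall argument is essentially mechanical: in most cases one inverts the assumed derivation of $\jtm{\varsigma}$, reads off the $\jctx{(\pi_0,D_0)}$ sub-derivation for the appropriate tail of the stack and dump, and then re-applies the inference rules of Figure~\ref{fig:wfcfg} to build the conclusion.

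For part (a), the re-derivations are short. In rule~1, for instance, the hypothesis gives $\jctx{(\pi,D)}$, from which two context-configuration rules (one for $@$, one for the term slot above it) yield $\jctx{((\qrefl|\pure{N}|e)::(q|@|\id)::\pi,D)}$, and the term-configuration rule then gives the required judgment. Rules~3, 4, 6, 7, 8, 10 are handled by analogous re-derivations. In rules~2 and 5 we must also check that popping values from $D$ and pushing a new term tuple lines up with the $\jctx{}$ rules; this is where the validity hypothesis pays off, since it guarantees that the dump already contains exactly as many values as the pending code expects.

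For part (b), subterms of a closed term are closed under the same environment, so in the decomposing rules (1, 4, 6, 8) the propagated environment continues to bind every free index. When the environment is extended, as in rule~2 (by $(\qrefl \derives C)\cdot e$) and rule~5 (by $V \cdot e$), the newly bound value is popped from the dump and is therefore closed by hypothesis, while the body $\pure{M}$ or $\pure{N}$ was known to be closed by the original environment together with one additional binder, hence is closed by the extension. Rules~3 and 7 simply move a value onto the dump, preserving closedness, and rule~10 returns $e(n)$, which validity guarantees is closed.

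The main obstacle is rule~9, the firing of a trail inspection. The new code $J_{q,\vect{q_i},\pi,D} = \cI((q;\qtr(\vect{q_i})),\pi,D)$ is not a syntactic subterm but is synthesized by recursing through the surrounding context configuration up to the closest enclosing $\bang$, so two extra points must be checked. First, that this recursion is defined on every valid context configuration sitting between an inspection node and a bang: this follows by inspecting Figure~\ref{fig:materialize} against the $\jctx{}$ rules and observing that the clauses of $\cI$ exhaust exactly the admissible context-configuration shapes, with the sole omission of $(\epsilon,\epsilon)$, whose exclusion expresses the ``inspection-locked'' precondition. Second, that the synthesized trail, viewed as an open term with nine dangling indices, is closed by the new environment $[\vect{(\qrefl \derives C_i)}]$: this holds because $\cI$ only assembles trail components already present in the enclosing (valid) machine state, and the nine free slots correspond precisely to the nine inspection branches now bound by the environment. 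With those two points in hand, the well-formedness of the resulting term tuple follows from $\jctx{(\pi,D)}$ in the usual way, completing the case analysis.
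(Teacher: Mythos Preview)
The paper does not actually supply a proof of this lemma; it is stated and immediately followed by the corollary that all reachable states are valid, so the authors treat it as routine. Your case analysis on the ten transition rules is exactly the expected argument, and the structure you give---invert $\jtm{\varsigma}$ to obtain $\jctx{}$ for the residual stack and dump, then re-apply the rules of Figure~\ref{fig:wfcfg}---is correct.

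One small point worth tightening concerns rule~5. Your argument that ``$\pure{N}$ was known to be closed by the original environment together with one additional binder'' is what you need, but note that the paper's stated definition of validity only quantifies over tuples of the form $(q|\pure{M}|e)$ with $\pure{M}$ a genuine term; it says nothing explicitly about tuples whose code is $\tlet(\pure{N})$. For the lemma to go through as you describe, the invariant must implicitly also require that in any tuple $(q|\tlet(\pure{N})|e)$ the term $\pure{N}[1\cdot(e\circ\uparrow)]$ is closed and all values in $e$ are closed. This is clearly the intended reading (and is exactly what rule~4 establishes when such a tuple is created), so you are right to assume it; you may simply want to make that strengthening of the invariant explicit. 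A similar remark applies to rule~9: once the transition fires, $\cI$ is by hypothesis defined, so your first sub-point is a precondition of the rule rather than something to be proved inside the validity-preservation argument.
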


\begin{corollary}
All reachable states are valid.
\end{corollary}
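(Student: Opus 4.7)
The statement is a routine consequence of Lemma~\ref{lem:valid}, so the plan is a straightforward induction on the number of machine steps taken to reach a given state from an initial configuration. First I would fix what counts as an initial configuration: for a closed pure \CAUm term $\pure{M}$, this is $((\qrefl|\pure{M}|\id)::\epsilon,\;\epsilon)$, following the convention introduced in Section~\ref{SECD}. I would verify that this starting state is itself valid by checking the three clauses of the definition: (i) $\jtm{((\qrefl|\pure{M}|\id)::\epsilon,\epsilon)}$ holds by applying the term-configuration rule in Figure~\ref{fig:wfcfg} to the base context-configuration judgment $\jctx{(\epsilon,\epsilon)}$; (ii) $\pure{M}[\id]$ is closed because $\pure{M}$ is closed and $\id$ introduces no new free indices; (iii) the environment $\id$ contains no values and the dump is empty, so the value-closedness conditions hold vacuously.

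Next I would perform induction on the length $n$ of a reduction sequence $\varsigma_0 \mapsto \varsigma_1 \mapsto \cdots \mapsto \varsigma_n$. The base case $n=0$ is the validity of the initial configuration just established. For the inductive step, assuming $\varsigma_{n-1}$ is valid, Lemma~\ref{lem:valid} applied to the single step $\varsigma_{n-1} \mapsto \varsigma_n$ yields validity of $\varsigma_n$ directly. No further case analysis on the transition rules of Figure~\ref{fig:machine} is needed at this level, since the lemma already covers each rule.

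I do not expect any real obstacle here; the only subtlety is being explicit about the notion of ``reachable'', i.e.\ reachable from an initial configuration of the form above, rather than from an arbitrary stack/dump pair (which would not in general be valid). If one wishes to state the corollary for any well-formed initial configuration satisfying the closedness conditions, the same induction applies without change. The real content sits in Lemma~\ref{lem:valid}, whose proof proceeds by inspecting each of the ten transition rules and checking that $\jtm{}$ is preserved and that closedness of the new environments and dump entries follows from closedness of the old ones (in particular, rules 2, 5, 6 extend environments only with closures or bangs built from already-closed material, and rule 10's lookup returns a sub-closure of an already-closed environment entry).
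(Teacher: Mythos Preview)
Your proposal is correct and matches the paper's approach: the corollary is stated without proof as an immediate consequence of Lemma~\ref{lem:valid}, and your induction on the length of the transition sequence, together with the observation that the initial configuration $((\qrefl|\pure{M}|\id)::\epsilon,\epsilon)$ is valid, is exactly the intended (and only reasonable) argument. Your care in spelling out what ``reachable'' means and why the base case holds is appropriate, though more detailed than the paper itself bothers to make explicit.
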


\begin{lemma}\label{lem:lookup}
Suppose $n[e]$ is closed: then $\erase{n[e]} \CAUmsrred e(n)$.
\end{lemma}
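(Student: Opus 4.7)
My plan is to prove the lemma by induction on $n$, combined with the $\sigma$-rules of \CAUms, and to hinge on an auxiliary sublemma that reconciles the residual erasure with the closure returned by $e(n)$.

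For the base case $n = 1$, closedness of $1[e]$ forces $e = V \cdot e'$ with $V = q \derives C$. The rules $1[M \cdot s] \sred M$ and $\erase{q \derives M} \sred \erase{M}$, together with the $\sigma$-congruence, reduce $\erase{1[e]}$ in two steps to $\erase{C}$, where $C = e(1)$. For the inductive step I use the de~Bruijn encoding $n = 1[\lift^{n-1}]$, the composition rule $M[s][t] \sred M[s \comp t]$, and the peeling rule $\lift \comp (M \cdot s) \sred s$ to rewrite $n[V_1 \cdot e']$ to $(n-1)[e']$; congruence then gives $\erase{n[e]} \strred \erase{(n-1)[e']}$, after which the induction hypothesis yields $e'(n-1) = e(n)$.

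Both cases reduce to a sublemma: for any closure $C$, $\erase{C} \steeq C$. When $C = \bang_q C'$, this is immediate from $\erase{\bang_q M} \sred \bang_q M$. The subtle case is $C = \erase{(\lambda \pure{M})[e']}$, where $\erase{C}$ hides a double erasure that no single $\sigma$-rule collapses. My plan is to show instead that the two terms share a common $\sigma\tau$-normal form. By Lemma~\ref{lem:projF2R}, $\sigma\tau(\erase{X}) = \Erase{X}$ for any $X$; since $\Erase{X}$ never begins with an outer trail (by its very definition), applying this identity twice gives
\[
\sigma\tau(\erase{\erase{(\lambda \pure{M})[e']}}) = \Erase{\erase{(\lambda \pure{M})[e']}} = \Erase{(\lambda \pure{M})[e']} = \sigma\tau(\erase{(\lambda \pure{M})[e']}),
\]
so $\erase{C}$ and $C$ are $\sigma\tau$-convertible. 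Since $\steeq$ is contained in $\CAUmsrred$, this completes the sublemma and hence the lemma.

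I expect the genuine obstacle to be precisely this double-erasure collapse. A naive reading of the lemma suggests one would need a syntactic rule $\erase{\erase{M}} \sred \erase{M}$, which is absent from Figure~\ref{fig:sigmaexp}; the proof instead must take the detour through the meta-level projection $\Erase{\cdot}$, exploiting the fact that closures carry no top-level trail so that a second application of erasure becomes a no-op modulo $\sigma\tau$-equivalence.
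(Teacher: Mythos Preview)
The paper states Lemma~\ref{lem:lookup} without proof, so there is no author's argument to compare against; your proposal fills the gap and is essentially correct.

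The induction on $n$ together with the $\sigma$-rules handles the lookup itself with no surprises. Your identification of the double-erasure collapse $\erase{\erase{(\lambda\pure M)[e']}} \steeq \erase{(\lambda\pure M)[e']}$ as the real content is apt: no $\sigma$-rule rewrites $\erase{\erase{M}}$ directly, and your detour via Lemma~\ref{lem:projF2R} and the idempotence of the meta-level projection $\Erase{\cdot}$ (which never produces an outer trail, so that $\Erase{\Erase{X}}=\Erase{X}$) is exactly the right way to close this case.

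One small inaccuracy worth noting: in the inductive step you write that the $\sigma$-rules ``rewrite $n[V_1\cdot e']$ to $(n{-}1)[e']$'' and conclude $\erase{n[e]} \strred \erase{(n{-}1)[e']}$. For $n>2$ this is not literally a forward reduction: $1[\lift^{n-1}][V_1\cdot e']$ and $1[\lift^{n-2}][e']$ both $\sigma$-reduce to the common form $1[\lift^{n-2}\circ e']$, but neither reduces to the other (the rule $M[s][t]\sred M[s\circ t]$ only merges, never splits). This does no harm to your argument, since $\CAUmsrred$ already contains $\steeq$ and you only need $\sigma\tau$-convertibility to invoke the induction hypothesis; just replace $\strred$ by $\steeq$ in that line.
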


\begin{lemma}\label{lem:jctxval}
If $\jctx{\pi,D}$, then for all $V$ we have $\jtm{\pi,V::D}$ and $\cT(\pi,V::D) = \dec{\pi,D}[V]$.
\end{lemma}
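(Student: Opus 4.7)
The plan is to prove both conjuncts simultaneously by induction on the derivation of $\jctx{(\pi,D)}$, doing case analysis on the last inference rule. For each of the six rules producing $\jctx{(\pi,D)}$ in Figure~\ref{fig:wfcfg}, I identify the ``matching'' term-configuration rule which, upon pushing one more value on the dump, yields $\jtm{(\pi,V::D)}$, and I then verify the denotation equation by unfolding one layer of $\dec{\cdot}$ and $\tmdec{\cdot}$.

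In the base case $(\pi,D)=(\epsilon,\epsilon)$, term rule 1 gives $\jtm{(\epsilon,V::\epsilon)}$, and both $\tmdec{\epsilon,V::\epsilon}=V$ and $\dec{\epsilon,\epsilon}[V]=\Hole[V]=V$. For a stack-top of the form $(q|@|\id)$, $(q|\tlet(\pure M)|e)$, or $(q|\bang|\id)$, the corresponding term rule (3, 4, or 5 respectively) directly supplies the judgment, and the denotation check is a one-line substitution into the hole produced by the matching context-denotation clause. The application case (context rule 2), where the stack-top is a pure term tuple followed by an $@$-tuple, is handled by term rule 2: from $\jctx{(\pi',D')}$ we build $\jctx{((q'|@|\id)::\pi',V::D')}$ using context rule 3, so term rule 2 applies; the equation then reduces to $\dec{(\pi',D')}[q' \derives (V~(q \derives \erase{\pure M[e]}))]$ on both sides after unfolding.

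For the inspection case (context rule 6) I split on how many branches are still pending. If at least one branch remains on the stack, the stack-top is a pure-term tuple, so term rule 2 applies provided its premise holds; this premise is exactly context rule 6 reinstantiated with $k$ shifted to $k+1$ (one fewer branch on the stack, one more value on the dump), which follows from the same premise $\jctx{(\pi',D')}$. If all nine branches have been evaluated after pushing $V$, then $\pi=(q'|\tinsp|\id)::\pi'$ and we finish with the $\tinsp$ term rule (the bottom-right rule of Figure~\ref{fig:wfcfg}, which I read as $\jtm{\cdot}$ despite the apparent typo). The denotation equation then matches the clause for $\vect{(q_i|\pure{M_i}|e_i)}::(q'|\tinsp|\id)::\pi$ on the LHS with the clause for $(q'|\tinsp|\id)::\pi$ on the RHS after the hole receives $V$; the indexing convention $i+j+1=9$ ensures the two lists of evaluated and unevaluated branches line up.

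The only mild obstacle is bookkeeping in the inspection case: one must check that the index shift used when re-applying context rule 6 keeps the invariant ``values on the dump plus branches on the stack = 9'' intact, and that the hole lands in the correct branch slot so that $[\vec{V_j},\Hole,\vec{(q_i \derives \erase{\pure M_i[e_i]})}][V]$ on the context denotation agrees with $\tinsp(\vec{V_j},V,\vec{(q_i \derives \erase{\pure M_i[e_i]})})$ on the term denotation. All other cases are purely syntactic matching, so no new ideas are required.
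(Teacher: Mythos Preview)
Your proposal is correct and is exactly the natural argument; the paper states this lemma without proof, treating it as routine, and your induction on the derivation of $\jctx{(\pi,D)}$---matching each context rule with the corresponding term rule and unfolding one layer of the denotation---is precisely what that routine proof amounts to. Your handling of the inspection case, including the index shift when re-invoking context rule~6 and the observation that the bottom-right rule of Figure~\ref{fig:wfcfg} should conclude $\jtm{\cdot}$, is accurate.
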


\begin{lemma}\label{lem:auxinsp}
If $\cI((q;\qtr(\vect{q_i \derives C_i})),\pi,D)$ is defined, then $\dec{(\pi,D)}[q \derives \tinsp(\vect{q_i \derives C_i})] = \cE_\sigma[\bang_{q^*} \cF[q \derives \tinsp(\vect{q_i \derives C_i})]$
and $\cI((q;\qtr(\vect{q_i \derives C_i})),\pi,D) = \sigma\tau(q^*;\trailify{\cF[q \derives \tinsp(\vect{q_i \derives C_i})]})$
for some $\cE_\sigma$, $\cF$, $q^*$.
\end{lemma}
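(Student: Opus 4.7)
The proof will proceed by induction on the definition of $\cI$, which (by the Figure~\ref{fig:materialize} definition) is structural recursion on the well-formedness derivation $\jctx{(\pi,D)}$. Thus there is one base case, corresponding to the bang clause, and four inductive cases (application-left, application-right, $\tlet$, and trail inspection). In each case we construct explicit witnesses for $\cE_\sigma$, $\cF$, and $q^*$, and then verify both equations by unfolding the definitions of $\dec{\cdot}$ (from Section~\ref{SECD}) and of $\trailify{\cdot}$ (from Figure~\ref{fig:sigmaexp}).

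In the base case $\pi = (q'|\bang|\id)::\pi'$, we pick $\cE_\sigma \triangleq \dec{(\pi',D)}$, $\cF \triangleq \Hole$, and $q^* \triangleq q'$. The denotation equation is immediate from the clause $\dec{((q'|\bang|\id)::\pi',D)} = \dec{(\pi',D)}[q' \derives \bang \Hole]$. For the trail equation, the $\sigma$-rules in Figure~\ref{fig:sigmaexp} reduce $\trailify{\Hole[q \derives \tinsp(\vec{q_i\derives C_i})]}$ to $\qtrans(q,\qtr(\vec{q_i}))$ (after propagating $\trailify$ through $\tinsp$, using $\trailify{C_i}=\qrefl$ since closures carry no free trails once $\sigma\tau$-normalized), which is precisely the $q_\vartheta$ passed to $\cI$; $\sigma\tau$-normalization on both sides closes the case.

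For each inductive case we apply the IH to the smaller configuration $(\pi',D')$ with the modified trail $q'_\vartheta$ that $\cI$ passes in the recursive call, obtaining $\cE_\sigma'$, $\cF'$, $q^*$. We then set $\cE_\sigma \triangleq \cE_\sigma'$, $q^*$ unchanged, and extend $\cF'$ by wrapping it in one constructor --- for instance in the application-left case, $\cF \triangleq \cF'[q'' \derives (\Hole~(q' \derives \erase{\pure{M}[e]}))]$ --- so that after one unfolding step of $\dec{\cdot}$ on the new head tuple the first equation reduces to the IH. The trail equation then follows because $\trailify{\cdot}$ propagates through term constructors as the corresponding congruence trail: thus $\trailify{\cF[q \derives \tinsp(\cdots)]}$ matches $\trailify{\cF'[q'_\vartheta \derives\cdots]}$ up to $\sigma\tau$-equivalence, which in the application case is exactly $\qapp(q_\vartheta,q')$ prefixed by $q''$, matching the recursive call $\cI((q'';\qapp(q_\vartheta,q')),\pi',D)$. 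The other three inductive cases are analogous, with $\qlet$ replacing $\qapp$ for $\tlet$, symmetric treatment for the application-right (where a value $q''\derives C$ on the dump provides the left operand's trail), and the $\qtr$ congruence collecting the evaluated and unevaluated inspection branches for the trail-inspection case.

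The main obstacle is the bookkeeping of trail congruences: one must verify that the wrapper trails ($\qapp$, $\qlet$, $\qtr$) accumulated by the recursive clauses of $\cI$ exactly mirror the structure that $\trailify{\cdot}$ recovers from the extended context $\cF$, and that the auxiliary trails ($q'$, $q''$, the $q_i$'s) are threaded in the same order on both sides. The fact that $\cI$ returns $\sigma\tau(q_\vartheta)$ only at the base case --- rather than normalizing at every step --- is essential: it lets us rely on $\sigma\tau$-equivalences such as associativity of $\qtrans$, the $\qrefl$ identity laws, and the $\qapp(\qrefl,\qrefl)\tred\qrefl$ family to reconcile trails that are built up in superficially different orders. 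No new inductive invariants are required beyond what is already guaranteed by $\jctx{(\pi,D)}$.
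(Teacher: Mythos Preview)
Your approach is exactly the paper's: the paper's proof is the single line ``By induction on the derivation of $\jctx{\pi,D}$,'' and you have correctly unfolded what that induction looks like, case by case, with the right witnesses for $\cE_\sigma$, $\cF$, $q^*$ and the right appeal to the $\sigma$-rules for $\trailify{\cdot}$ to match the congruence trails accumulated by $\cI$.

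One small slip in the base case: the denotation clause gives $\dec{(\pi',D)}[q' \derives \bang\,\Hole]$, where $\bang$ is $\bang_\qrefl$, not $\bang_{q'}$; so with your choice $q^* = q'$ and $\cE_\sigma = \dec{(\pi',D)}$ the first equation is not a syntactic identity. The fix is to absorb the outer $q'\derives$ into the context, taking $\cE_\sigma \triangleq \dec{(\pi',D)}[q' \derives \Hole]$ and $q^* \triangleq \qrefl$; the trail equation is unaffected since $\qtrans(\qrefl,-)$ is removed by $\tau$-normalization.
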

\begin{proof}
By induction on the derivation of $\jctx{\pi,D}$.
\end{proof}

\begin{lemma}\label{lem:admissible}
The following rules are admissible:
\begin{enumerate}
\item $(q \derives \erase{(\lambda\pure{M}[e])}) \; (q' \derives C) \CAUmsrred \qapp(q,q');\qba \derives \erase{\pure{M}[(\qrefl \derives C) \cdot e]}$
\item $\tlet(q \derives \bang~V, \erase{\pure{N}[1\cdot(e \circ \uparrow)]}) \CAUmsrred \qlet(q,\qrefl);\qbb;\trailify{\erase{\pure{N}[1 \cdot (e \circ \uparrow)]}[V]} \derives \erase{\pure{N}[V\cdot e]}$
\item $q \derives \tinsp(\vect{q_i \derives C_i}) \CAUmsrred q;\qtr(\vect{q_i});\qti \derives \erase{\cI((q;\qtr(\vect{q_i \derives C_i})),\pi,D)[\vect{\qrefl \derives C_i}]}$
\end{enumerate}
\end{lemma}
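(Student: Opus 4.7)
The plan is to prove each of the three admissibility claims by exhibiting an explicit $\CAUmsred$-derivation of the right-hand side from the left-hand side. In every case the strategy is the same: use $\sigma\tau$-equivalence to massage the term into a shape where exactly one $\Bred$-step fires, apply that step, then use $\sigma\tau$-equivalence again to simplify the baroque trail that $\Bred$ produces into the compact form on the right. This fits the definition ${\CAUmsred} = {\Bred} \cup {\steeq}$ directly.

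For item 1, I would first use the two $\tau$-rules $(q \derives M)~N \tred \qapp(q,\qrefl) \derives M~N$ and $M~(q \derives N) \tred \qapp(\qrefl,q) \derives M~N$, together with the bookkeeping $\tau$-rules that collapse $\qapp(\qrefl,\qrefl)$ and merge transitivities, to pull the outer trails outward and obtain $\qapp(q,q') \derives \erase{(\lambda\pure{M})[e]}~C$. Next, two $\sigma$-steps, namely $(\lambda\pure{M})[e] \sred \lambda.(\pure{M}[1 \cdot (e \comp \lift)])$ and $\erase{\lambda.X} \sred \lambda.\erase{X}$, expose a pure lambda in function position, so $\Bred$ fires and the resulting trail contains a term of the form $\qtrans(\qapp(\qlam(\trailify{\erase{\cdots}}),\trailify{C}),\qba)$. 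I would then exploit two facts, both provable by straightforward $\sigma\tau$-reduction on the $\sigma$-normal forms of closures: $\trailify{\erase{X}} \steeq \qrefl$ (since $\erase{X}$ carries no trails to extract), and $\erase{C} \steeq C \steeq \qrefl \derives C$ when $C$ is a closure. These collapse the trail to $\qapp(q,q');\qba$ and the body to $\erase{\pure{M}}$ with $\qrefl \derives C$ in the first slot of the substitution; finally, the $\sigma$-rules for composition give $\erase{\pure{M}[1 \cdot (e \comp \lift)]}[\qrefl \derives C] \steeq \erase{\pure{M}[(\qrefl \derives C) \cdot e]}$.

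Item 2 proceeds analogously using the $\tau$-rule $\tlet(q \derives M, N) \tred \qlet(q,\qrefl) \derives \tlet(M,N)$ to expose a $\tlet$-bang redex, followed by the second $\Bred$-rule. The extra trail $\trailify{\erase{\pure{N}[1\cdot(e\comp\uparrow)]}[V]}$ in the target appears precisely because substituting $V$ (which may carry trails) into the body causes those trails to surface under $\sigma\tau$-normalization; the identity $(1 \cdot (e \comp \lift))[V \cdot \id] \steeq V \cdot e$ rewrites the substituted body into the form $\erase{\pure{N}[V \cdot e]}$.

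Item 3 is the most delicate. The $\Bred$-rule for trail inspection, $\bang_{q^*} \cF[\tinsp(\vartheta)] \Bred \bang_{q^*} \cF[\qti \derives q'\vartheta]$ with $q' = \sigma\tau(\qtrans(q^*,\trailify{\cF[\tinsp(\vartheta)]}))$, is only available inside a $\bang$; Lemma~\ref{lem:auxinsp} supplies exactly this: it guarantees $\dec{(\pi,D)}[q \derives \tinsp(\vect{q_i \derives C_i})] = \cE_\sigma[\bang_{q^*}\cF[q \derives \tinsp(\vect{q_i \derives C_i})]]$ and, crucially, $\cI((q;\qtr(\vect{q_i \derives C_i})),\pi,D) = \sigma\tau(q^*;\trailify{\cF[q \derives \tinsp(\vect{q_i \derives C_i})]})$, so the operator $\cI$ literally computes the trail that the inspection step would need. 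The remaining point is that the meta-level trail-recursion $q'\vartheta$ can be encoded in \CAUms as the explicit substitution of the branches $\vect{\qrefl \derives C_i}$ into the reified trail, which is precisely the shape $\erase{\cI(\ldots)[\vect{\qrefl \derives C_i}]}$ in the conclusion.

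The main obstacle I expect is the trail bookkeeping in item 1 (and the analogous step in item 2): the $\Bred$-rule introduces trails involving $\trailify{\erase{\cdot}}$ and $\erase{\erase{\cdot}}$ that must be collapsed by $\sigma\tau$-equivalence, and showing formally that these collapses go through requires carefully reasoning about the $\sigma\tau$-normal forms of erased closures. Item 3 is conceptually trickier but mechanically lighter once Lemma~\ref{lem:auxinsp} is in hand: all the work is hidden in that lemma's induction on $\jctx{\pi,D}$, so here the task reduces to lining up names and invoking the third $\Bred$-rule.
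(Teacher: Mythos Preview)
Your proposal is correct and is essentially a detailed expansion of the paper's one-line proof (``By induction, also using Lemma~\ref{lem:auxinsp} for the third rule''): you use $\tau$-rules to surface the redex, fire one $\Bred$, then collapse the resulting trail by $\sigma\tau$-equivalence, invoking Lemma~\ref{lem:auxinsp} precisely where the paper does. The auxiliary facts you flag as needing care---idempotency of erasure on closures, $\trailify{\erase{X}} \steeq \qrefl$, and the substitution identity $(1\cdot(e\comp\lift))\comp(V\cdot\id) \steeq V\cdot e$ together with commutation of erasure past that composition---are exactly the points where the paper's ``by induction'' is doing silent work, so your assessment of where the effort lies is accurate.
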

\begin{proof}
By induction, also using Lemma~\ref{lem:auxinsp} for the third rule.
\end{proof}

We can now prove that the abstract machine is correct.

\subsubsection*{Proof of Theorem \ref{thm:SECDsound}.}
\emph{For all valid $\varsigma$, $\varsigma \mapsto \varsigma'$ implies $\tmdec{\varsigma} \CAUmsrred \tmdec{\varsigma'}$.}

We proceed by cases on the transition $\varsigma \mapsto \varsigma'$, knowing that $\varsigma$ is valid by hypothesis and $\varsigma'$ by Lemma~\ref{lem:valid}. In rules 3, 7, 10 we also use Lemma~\ref{lem:jctxval} to infer the shape of the denotation of the target state. We need to prove the following statements:
\begin{enumerate}
\item $\dec{(\pi,D)}[q \derives \erase{(\pure{M}~\pure{N})[e]}] \CAUmsrred
  \dec{(\pi,D)}[q \derives ((\qrefl \derives \erase{\pure{M}[e]})~(\qrefl \derives \erase{\pure{N}[e]}))]$
\item $\dec{(\pi,D)}[q \derives ((q'' \derives \erase{(\lambda\pure{M})[e]})~(q' \derives C))] \CAUmsrred \\
  \dec{(\pi,D)}[q;\qapp(q'',q');\qba \derives \erase{\pure{M}[(\qrefl \derives C) \cdot e]}]$
\item $\dec{(\pi,D)}[q \derives \erase{(\lambda\pure{M})[e]}] \CAUmsrred
  \dec{(\pi,D)}[q \derives \erase{(\lambda\pure{M})[e]}]$
\item $\dec{(\pi,D)}[q \derives \erase{\tlet(M,N)[e]}] \CAUmsrred \\
  \dec{(\pi,D)}[q \derives \tlet(\qrefl \derives \erase{\pure{M}[e]}, \erase{\pure{N}[1 \cdot (e \circ \uparrow)]})]$
\item $\dec{(\pi,D)}[q \derives \tlet(q'\derives \bang V, \erase{\pure{N}[1 \cdot (e \circ \uparrow)]})] \CAUmsrred 
  \\
  \dec{(\pi,D)}[q;\qlet(q',\qrefl);\qbb;\trailify{\erase{\pure{N}[1 \cdot (e \circ \uparrow)]}[V]} \derives \erase{\pure{N}[V\cdot e]}]$
\item $\dec{(\pi,D)}[q \derives \erase{(\bang_{q'} \pure{M})[e]}] \CAUmsrred
  \dec{(\pi,D)}[q \derives \bang_{q';\trailify{\pure{M}[e]}} \erase{\pure{M}[e]}]$
\item $\dec{(\pi,D)}[q \derives \bang V] \CAUmsrred 
  \dec{(\pi,D)}[q \derives \bang V]$
\item $\dec{(\pi,D)}[q \derives \erase{\tinsp(\vect{M_9})[e]}] \CAUmsrred 
  \dec{(\pi,D)}[q \derives \tinsp(\qrefl \derives \erase{M_1[e]},\ldots,\qrefl \derives \erase{M_9[e]})]$
\item $\dec{(\pi,D)}[q \derives \tinsp(\vect{(q_i \derives C_i)}_{i=1,\ldots,9})] \CAUmsrred \\
  \dec{(\pi,D)}[q;\qtr(\vect{q_i});\qti \derives \erase{\cI((q;\qtr(\vect{q_i}),\pi,D)[\vect{\qrefl \derives C_i}]}]$
\item $\dec{(\pi,D)}[q \derives \erase{n[e]}] \CAUmsrred 
  \dec{(\pi,D)}[q \derives e(n)]$
\end{enumerate}
The statements number 2, 5, and 9 follow from Lemma~\ref{lem:admissible}; statement 10 is implied by Lemma~\ref{lem:lookup}; the other statements follow immediately from the definition of $\sred$ and $\tred$.

\end{document}